%
%
%
%
%
%
%
\documentclass[%
 reprint,
 amsmath,amssymb,
 aps,
]{revtex4-2}

\usepackage{graphicx}
\usepackage{dcolumn}
\usepackage{bm}
\usepackage{physics}

\graphicspath{{images/}}
\usepackage{amssymb}
\usepackage{amsmath}

\usepackage{tikz}

\usepackage{hyperref}

\usepackage{amsthm}

\usepackage{mathrsfs}
\usepackage{xr}
\externaldocument[Supp-]{SupplementaryMaterial}

\newcommand{\sign}{\text{Sign}}

\theoremstyle{plain}
\newtheorem{theorem}{Theorem}[section]

\theoremstyle{definition}

\newtheorem{prop}[theorem]{Proposition}

\theoremstyle{remark}

\newtheoremstyle{mydef}
	{\topsep}   
    {\topsep}   
    {}  
    {0pt}       
    {\bfseries} 
    {.}         
    {5pt plus 1pt minus 1pt} 
    {}          

\theoremstyle{mydef}
\newtheorem{definition}[theorem]{Definition}

\newtheorem{corollary}[theorem]{Corollary}

\begin{document}

\title{A Topological Classification of Finite Chiral Structures using Complete Matchings}

\author{Maxine M. McCarthy}

\author{D. M. Whittaker}
\affiliation{%
 University of Sheffield
}%

\date{\today}

\begin{abstract} 

We present the theory and experimental demonstration of a topological
classification of finite tight binding Hamiltonians with chiral
symmetry. Using the graph-theoretic notion of complete matchings, we show
that many chiral tight binding structures can be divided into a number
of sections, each of which has independent topological phases. Hence
the overall classification is $N\mathbb{Z}_2$, corresponding to $2^N$
distinct phases, where $N$ is the number of sections with a non-trivial
$\mathbb{Z}_2$ classification. In our classification, distinct topological
phases are separated by exact closures in the energy spectrum of the
Hamiltonian, with degenerate pairs of zero energy states. We show that
that these zero energy states have an unusual localisation across
distinct regions of the structure, determined by the manner in which
the sections are connected together. We use this localisation to
provide an experimental demonstration of the validity of the
classification, through radio frequency measurements on a coaxial
cable network which maps onto a tight binding system. The structure we
investigate is a cable analogue of an ideal graphene ribbon, which
divides into four sections and has a $4\mathbb{Z}_2$ topological
classification.

\end{abstract}

\maketitle

\section{\label{sec:level1}Introduction}

Topology has become ubiquitous in modern physics, with many
lattice structures and materials shown to have non-trivial topological
attributes. A topological material is characterised by properties
which remain unchanged during adiabatic evolution, making them robust
in the presence of disorder. For instance, boundary states in
one-dimensional topological lattices may have energies which are
resistant to disorder \cite{LocalisedBoundaryModes}, while in
two-dimensions they may provide directional transport which is
protected against backscattering 
\cite{HallTopological,Haldane,NoBackScattering1,NoBackScattering}.  
In order to investigate these potentially useful
properties, it is necessary to be able to classify the different topological
phases of a given Hamiltonian.

Distinct topological phases are generally understood from two
perspectives. The first is that two phases are distinct if, and only if,
they cannot be related by (symmetry respecting) adiabatic evolution.
The second is that a non-trivial topological phase has 
interesting boundary properties. The first perspective allows us to
assign topological indices to distinct phases, and the second gives a
physical significance to a classification. Both are connected by the
famous bulk-boundary correspondence \cite{BulkBoundary1} which, since
the discovery that the quantum Hall effect is topological 
\cite{HallTopological,OGbulkBoundaryOrigin} has proven
a powerful tool for the prediction of topological boundary phenomena. 
Most approaches to classification make use of the bulk-boundary correspondence in some way.

A ubiquitous topological classification is that of the 10-fold
way \cite{10FoldWay}, in which K-theoretic methods are used to
classify each of the Altland-Zirnbauer (AZ) symmetry classes of momentum-space 
Hamiltonians \cite{AltlandZirnbauerSymmetries}. Another approach is
topological quantum chemistry \cite{TopQuantChem1,TopQuantChem2},
where the classification is obtained by looking for topologically
non-trivial states. In particular, the 
presence of states which cannot be adiabatically deformed to the
atomic limit indicates a topologically non-trivial band in the
bulk \cite{ChernInsulatorsNotLocalisable}. These methods provide a classification of stable
and fragile topological phases in a huge number of systems
\cite{TopQuantChem3}. 

A useful approach for strong disorder relies on defining a
non-commutative Brillouin zone
\cite{NonComBZmethods1,NonComBZmethods2,2014MondragenShen}. Using
non-commutative geometry, the Brillouin zone is modified to allow for
variations in each unit cell of the structure, by defining a
configuration space for the set of distinct unit cells.  Translational
invariance is no longer required, giving a disordered bulk-boundary
correspondence.

For finite structures, a spectral localiser
\cite{LocalPseudoSpectra_GeneralPaper,LocalInvariantsGaplessSystems}
can predict the presence of approximately zero energy boundary states (also by demonstrating states cannot be localised to an atomic limit),
therefore classifying a structure via the converse of the
bulk-boundary correspondence. Alternatively local realspace markers may
be used \cite{LocalTop1,LocalChernMarker} whose average gives a topological index for the entire 
structure. While this averages to a quantised value in the thermodynamic 
limit, such local markers are not exactly quantised in finite structures.

It has been shown that finite size effects may cause a 
rich sequence of topological phase transitions,
corresponding to gap closures separating \textit{topological bubbles}
\cite{FiniteInversionTopology1} in inversion symmetric structures. 
The number of approximately zero
energy boundary states in such a bubble takes an integer value
corresponding to a $\mathbb{Z}$ classification in finite, inversion-symmetric 
Hamiltonians. These states are resistant to small amounts of
disorder. Such an approach has also been extended to time-reversal
symmetric systems
\cite{TRSFiniteTopology1}.
Alternatively the finite structure may be repeated as a periodic supercell 
\cite{SupercellTreatment1,SupercellTreatment2}
allowing the use of momentum-space methods to classify the 
topological phases.

These approaches leave open a problem: how may a structure be
classified when we completely lose the bulk-boundary correspondence?
That is, we no longer have any way to define a boundary or a bulk. This
situation may occur in a small finite structure, particularly with strong disorder
and/or no underlying lattice structure (for instance, a random finite
network). 
Strongly disordered finite systems may have
a non-trivial topological classification
\cite{AnatomyOfTopAndersonTransition_As1DLocalisationStuff,LinearPaper,LocalPseudoSpectra_GeneralPaper}   
although the number of distinct topological phases is not always clear, motivating a general approach to topologically classify finite
structures which have lost the bulk-boundary correspondence. In this paper, we propose a partial solution
to this problem. Using graph-theoretic methods, we give a topological
classification for finite chiral symmetric structures. We achieve
this by considering equivalence classes of finite real space
Hamiltonians with arbitrary values for the hopping terms.

In a finite structure at sufficient disorder, a bulk may not be
possible to define, however non-adiabatic evolution may still be
defined. For a finite system, a closing in the energy spectrum is
analogous to a band gap closing in momentum space. In structures with
chiral symmetry, and thus a symmetric energy spectrum, this
corresponds to the appearance of pairs of zero energy states.
Determining the conditions on the hopping terms which lead to the presence
of these states allows us to define equivalence classes of
Hamiltonians. A similar approach to defining topological phase
boundaries has been used in
\cite{AnatomyOfTopAndersonTransition_As1DLocalisationStuff,LinearPaper,LocalPseudoSpectra_GeneralPaper}.

Using this definition of equivalence classes we find that many structures 
have a rich classification resulting from the fact that they can be divided into
a number of \textit{sections}, each of which can be assigned an independent 
topological phase. These sections are identified as corresponding to 
factors of the determinant. Such factors may be determined by 
the hopping terms that
appear in the expansion of the determinant of the Hamiltonian, and therefore play a role 
in defining the conditions for zero energy states. Any other hopping terms in the
Hamiltonian can be removed without affecting the classification. This removal process
results in the structure separating into disconnected pieces, each constituting a section.
For each section we can determine an independent 0 or $\mathbb{Z}_2$ classification, using a 
sub-Hamiltonian defined on the section.

The existence of sections depends on the connectivity of the network
representing the hopping terms in the Hamiltonian.  They can occur in structures
with some regularity, as in the graphene related examples which are considered
in this work. However, they can also appear in more randomly connected structures.
Checking 154131 non-isomorphic graphs which represent some of the connected chiral
structures with 18 sites or fewer, and using randomised searches on larger
finite chiral structures, we find that the
classification of a structure is the most rich when each site has an
average of three hopping terms. For an average of two the underlying
connectivity is generally too sparse to provide a rich classification, and at four
or greater, the structure is often too constrained by its connectivity to
allow the division into many sections.

Of the 154131 graphs classified, for all positive and real hopping terms, 14\% had no distinct topological phases, with the remaining being topologically non-trivial. 82\% had two topological phases, and 4\% had 4 to 16 distinct topological phases, with only 18 of the classified graphs having 16 phases. Including trivial and non-trivial sections 8\% of structures has no sections, 68\% had one section, and 24\% had 2 to 9 sections (trivial sections can still affect physical properties, such as localisation).

The infinite and finite classification of the sub-Hamiltonian corresponding to a section
may be different. Considering a section as a one-dimensional
supercell of an infinite lattice a transfer matrix treatment leads to one
trivial and one non-trivial topological phase, separated by a gap closure somewhere in the
Brillouin zone. Hence, with this definition, every section
has a $\mathbb{Z}_2$ classification. Furthermore, the classes AIII and BDI have a $\mathbb{Z}$ 
classification using the 10-fold way. 
In this infinite interpretation, a section having a non-trivial topological phase results 
in boundary localised topologically robust zero modes to one end of that section. 
This gives a connection to higher order topology arising through stacked chiral 
structures.

We seek to classify the actual finite 
structure, so we adopt the convention that only
unavoidable gap closures which are observable (corresponding to closure at zero momentum)
separate topological phases. This means that some sections will 
have a $\mathbb{Z}_2$ classification, while others will be topologically trivial.
The complete structure behaves as a stack of sections,
connected together in such a way that the topological phase of each section is
independent. This gives a classification as the direct sum
$\bigoplus_1^N \mathbb{Z}_2 \cong N\mathbb{Z}_2$ where $N$ is the
number of sections with a non-trivial classification, and depends on the underlying connectivity of the structure. 
For periodic materials, the
stacking is not unlike that seen in
\cite{StackingTopologicalPhaseTransitions}. Others have also shown that a rich classification can
follow from the underlying connectivity of a structure
\cite{Finite-lengthGraphenenanoribbons,SSHmodelsEngineeringTopPhases}. 

We also show that the localisation of the zero energy states that accompany a 
topological phase transition is
determined by the connections between the sections. We define a
partial ordering of the sections based on this connectivity,
such that each of the zero energy states created by making one section
topologically marginal spreads in only one direction through the
remaining sections. This provides an experimentally accessible
verification of the existence of the sections. We perform such an
experiment using a coaxial cable network, which has been shown to map
onto a tight-binding
Hamiltonian \cite{LinearPaper,DavidsArxivPaper,OtherExperimentTransmissionLineTopology,CoaxCablesPhotonicLattices}.
The structure we consider is  a zigzag
`graphene' nanoribbon consisting of four rows of sites, each of which forms
a separate section, leading to a $4\mathbb{Z}_2$ classification.

We begin by discussing the theory behind the classification in section
\ref{sec:level2} before discussing experimental results in a small
graphene structure in section \ref{sec:level3}, and concluding in
\ref{sec:level4}. Supplementary material is provided with more
mathematical and experimental details.

\section{\label{sec:level2}Classifying finite chiral structures}

\begin{figure}[hb]
\centering
\includegraphics[width=0.5\linewidth]{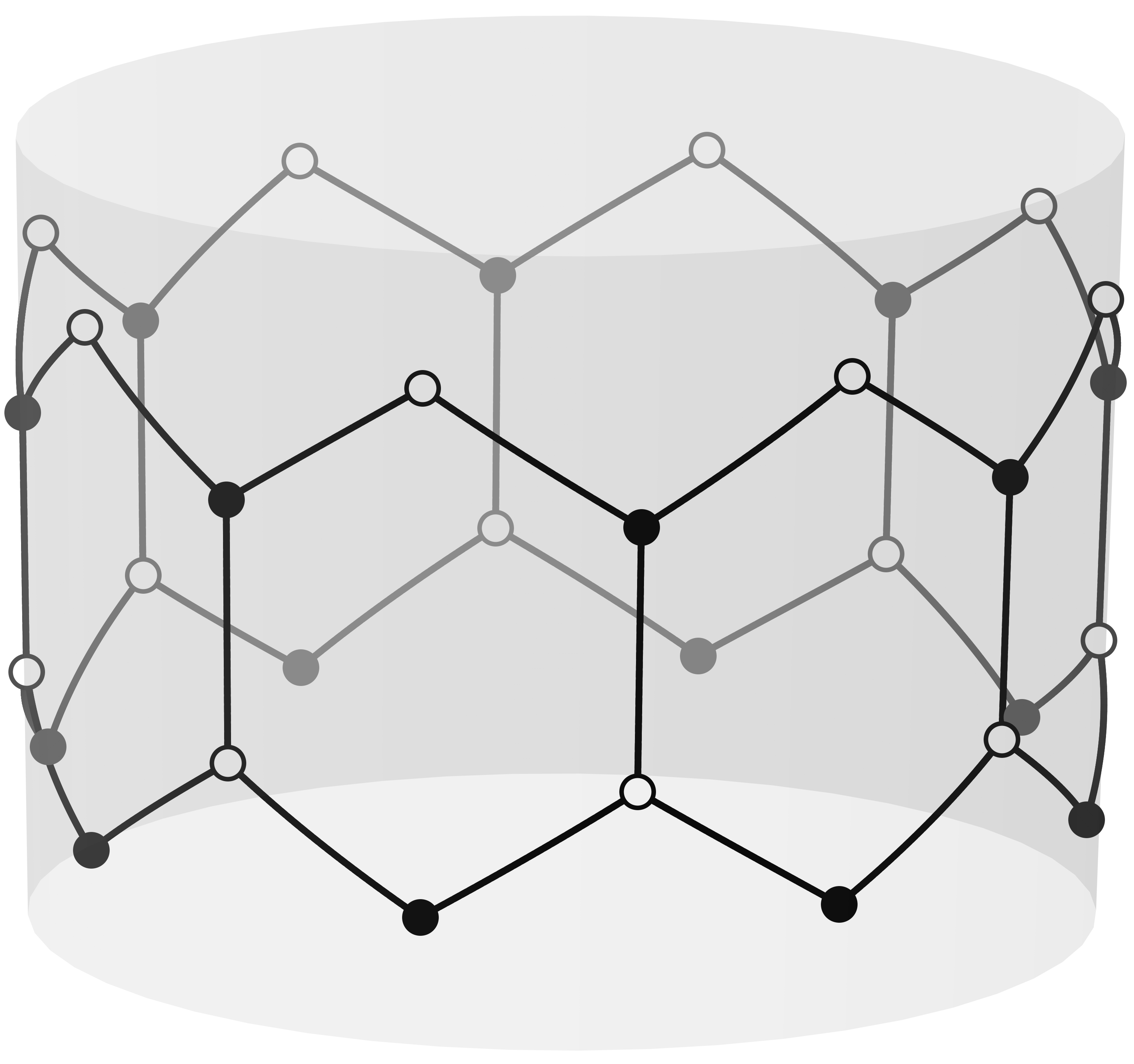}

\caption {A two row ribbon zigzag graphene, looped to form a
cylinder. The structure has chiral symmetry, with the two sublattices
indicated by the black and white sites. The topological classification
of structures like this can be found using the methods we describe: 
this one is $2\mathbb{Z}_2$.}

\label{2rowribbongrapheneexample}
\end{figure}

Chiral symmetry is one of the fundamental symmetries of the
Altland-Zirnbeaur symmetry classes \cite{AltlandZirnbauerSymmetries}.  A
chiral Hamiltonian $H$ anticommutes with a unitary $C$, ensuring
non-zero eigenvalues come in $\pm\varepsilon$ pairs. When allowing
algebraically independent hopping terms (thereby letting the structure have
arbitrary hopping disorder) a Hamiltonian with chiral symmetry necessarily has
two sublattices of sites, which we label `black' and `white' (a consequence of the Harary-Sachs theorem \cite{HarrarySachs1,HarrarySachs2}).
Each black site is only connected (through non-zero hopping terms) to white sites, and
each white site is only connected to black sites. An
example is shown in Fig.  \ref{2rowribbongrapheneexample}.

For a structure with $n_B$ black sites and $n_W$ white sites, there are
are always at least $|n_B-n_W|$ zero energy states. Since these exist
independently of the values of the hopping terms in $H$, they are described
as topologically protected states. Such states can also occur in structures
with $n_B=n_W$. If a structure has such protected states, the determinant of the Hamiltonian
is necessarily zero for all values of the hopping terms, so a classification based
on the conditions for $|H|=0$ is not possible.
In this work, we consider only structures with equal 
numbers of black and sites and no topologically protected states.

Two Hamiltonians $H_1$ and $H_2$, are considered to have a different
topological phase if they cannot be related under adiabatic evolution.
That is, there exists no continuous way to evolve between $H_1$ and
$H_2$ without a gap closing in the spectrum.
For a chiral Hamiltonian with an even number of sites,
zero energy states necessarily occur in degenerate pairs.
This degeneracy corresponds to a closed gap in the spectrum, indicating
non-adiabatic evolution. Therefore finding the conditions for a
singular Hamiltonian determines the topological phase
boundaries in a finite structure.

To explore distinct topological phases on a finite chiral
structure we consider a set of sites connected by non-zero hopping terms.
Formally this defines a graph $G$. On $G$ we define a tight binding
(TB) Hamiltonian, $H$, where each hopping term
may be continuously evolved, algebraically independently. To
ensure changes to $G$ only include continuous evolution we do not
allow new connections to be introduced, or existing connections to be broken. 
In terms of $H$, this means non-zero hopping terms must remain non-zero, 
and hopping terms which are zero cannot be modified.
Otherwise we allow complete freedom to
continuously evolve hopping terms of $H$, giving access to strongly
disordered Hamiltonians defined on $G$. We do, however, restrict 
the hopping terms to be real: any
topological phase boundary in a finite structure can be avoided, by evolving
the Hamiltonian through a path involving complex hopping terms.
Although we can allow negative values,
the requirement for non zero real hopping terms means that they cannot change sign
as they evolve.

The tuple of algebraically independent hopping terms on $G$, defines 
an affine space, $\xi$. Each Hamiltonian $H$ on $G$ defines a point
in $\xi$. Continuously evolving hopping terms in $H$ corresponds to
following a path in $\xi$. We refer to $\xi$ as the \textit{parameter
space} of $G$. Topological phase boundaries in $\xi$ correspond to
boundary free surfaces one dimension lower than $\xi$. An example of a
slice of a parameter space is given in Fig.
\ref{ParamSpaceSlice2rowribbongraphene}.

The classification problem can be completely reduced to finding
solutions to $|H|=0$, which necessarily corresponds to degenerate
zero energy states. Although degeneracies can occur at non-zero energies,
in a system with  algebraically independent
hopping terms, such a gap closure requires at least two
constraints on the hopping terms, a consequence of the Harary-Sachs theorem
\cite{HarrarySachs1,HarrarySachs2}. Such constraints are 
described by surfaces which are at least two dimensions lower
than $\xi$, and therefore cannot divide it: they are always avoidable gap closures.
This ensures topological phase
boundaries are only given by a collection of surfaces in $\xi$
corresponding to $|H|=0$.

The Hamiltonian for a chiral
structure has an antidiagonal block basis corresponding to ordering the sites by sublattice,
\begin{align}
 H=\begin{pmatrix}
   0 & Q \\
  Q^\dagger & 0
   \end{pmatrix} 
  \;,
\end{align} 
so the determinant $|H| = -|Q| |Q^{\dagger}|$. For equal numbers of black and white sites, $Q$
is a square matrix, so $|Q|=|Q^{\dagger}|$, and we can use either to determine the classification.

Every term in the expansion of $Q$ is
algebraically independent, so $|Q|$ is a multi linear form of the
hopping terms. That is $|Q|$ varies linearly with respect to each
hopping term in its expansion. 
For each hopping term in $|Q|$, solutions to $|Q|=0$
follow from solving a linear equation, constraining exactly one
hopping term. This ensures that every solution to $|Q|=0$ corresponds to a surface
in $\xi$ which is both unbounded, and one
dimension lower than that of the parameter space, splitting $\xi$ in to
distinct regions. This linear behaviour also ensures that $|Q|$
changes sign as we go over a phase boundary, meaning
$\sign\left[|Q|\right]$ is a topological invariant \cite{LocalPseudoSpectra_GeneralPaper}.
This is in contrast to 
$\sign\left[|H|\right]$, which does not change at a phase boundary.

To classify a structure, we must understand how to set $|Q|$
to zero by continuously evolving hopping terms, which may be done by
finding the irreducible factors of $|Q|$. If a factorisation of $|Q| =
\prod |q_i|$ exists, and it is possible to solve $|q_i|=0$, then the
factor $|q_i|$ defines a pair of distinct phases, where
$\sign\left[|q_i|\right]$ is a topological invariant. We say such a
factor is \textit{non-trivial}.  The classification of the structure
is then given by $\bigoplus^N \mathbb{Z}_2 \cong N\mathbb{Z}_2$ where
$N$ is the number of non-trivial factors.

The determinant $|Q|$ defines a polynomial in a unique
factorisation domain (details are given in the supplementary material section A)
ensuring the maximum number of non-trivial factors, $N$, of $|Q|$ is also
a topological invariant, so the classification is well defined.
It is only possible to change $N$  by removing or
introducing new hopping terms or sites, that is, under discontinuous
evolution of $H$. 

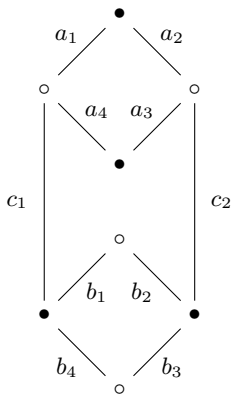
\begin{figure}[hbpt!]
\centering
\begin{tikzpicture}
\node (a) at (0,0) {$\circ$};
\node (b) at (1,1) {$\bullet$};
\node (c) at (2,0) {$\circ$};
\node (d) at (1,-1) {$\bullet$};

\node (a1) at (0.5-0.2,0.5+0.2) {$a_1$};
\node (a2) at (1.5+0.2,0.5+0.2) {$a_2$};
\node (a3) at (1.5-0.2,-0.5+0.2) {$a_3$};
\node (a4) at (0.5+0.2,-0.5+0.2) {$a_4$};

\node (e) at (0,0-3) {$\bullet$};
\node (f) at (1,1-3) {$\circ$};
\node (g) at (2,0-3) {$\bullet$};
\node (h) at (1,-1-3) {$\circ$};

\node (b1) at (0.5+0.2,0.5-0.2-3) {$b_1$};
\node (b2) at (1.5-0.2,0.5-0.2-3) {$b_2$};
\node (b3) at (1.5+0.2,-0.5-0.2-3) {$b_3$};
\node (b4) at (0.5-0.2,-0.5-0.2-3) {$b_4$};

\node (c1) at (-0.2*1.8,-1.5) {$c_1$};
\node (c2) at (2+0.2*1.8,-1.5) {$c_2$};

\draw (a) -- (b) -- (c) -- (d)--(a);

\draw (e)--(f)--(g)--(h)--(e);

\draw (a)--(e);
\draw (c)--(g);
\end{tikzpicture}
\caption{
A simple two row graphene zigzag structure with the hopping terms labelled as 
in  Eq.\eqref{SimpleExampleHamiltonian}. This structure
has two sections and four distinct topological phases, corresponding to a 
$2\mathbb{Z}_2$ classification.}
\label{Eq1Pic}
\end{figure}

To show a simple  example, we consider a 2 row zigzag graphene sructure, 
with 4 sites on each row, see Fig. \ref{Eq1Pic}. 
This is described by the matrix
\begin{equation} \label{SimpleExampleHamiltonian}
Q= \begin{pmatrix}
a_1 & a_4 & c_1 & 0 \\
a_2 & a_3 & 0 & c_2 \\
0 & 0 & b_1 & b_2 \\
0 & 0 & b_4 & b_3
\end{pmatrix}
\end{equation}
giving the determinant,
\begin{equation}
|Q| = (a_1a_3 - a_2a_4)(b_1b_3 - b_2b_4) = |q_1||q_2|.
\end{equation}
The condition $|Q|=0$ can be satisfied by making either factor, $|q_1|$ or $|q_2|$,
equal to zero. This gives four distinct topological phases corresponding to
$(\sign[|q_1|],\sign[|q_1|]) = (+1,+1),(+1,-1),(-1,+1),(-1,-1)$.
Therefore this structure has the classification $2\mathbb{Z}_2$.

It is also apparent from this simple calculation that some of the 
hopping terms -- $c_1$ and $c_2$ -- do not appear in the expansion of $|Q|$.
This means that the topological classification will be the same as for a structure
in which these hopping terms have been set to zero. Physically, removing these links
creates two completely separate lattices, the top and bottom loops in Fig.\ref{Eq1Pic}. 
We call these the two \textit{sections} of the original structure.

The Hamiltonian, and thus $Q$, for a separated structure is block diagonal, so the determinant,
$|Q|$, is simply a product of the two block determinants, as the expansion shows.
The absence of the $c$ terms from the expansion of $|Q|$ is a consequence of the 
block triangular form of Eq.(\ref{SimpleExampleHamiltonian}). However, this pattern
is only explicit if the sites are ordered correctly, which depends on finding the sections
defining the blocks in the matrix. As we prove in the supplementary material section A.1, 
the determinant is factored if and only if there exists a block triangular 
form of $Q$, so finding an appropriate ordering of sites allows us to find the number 
of sections in a structure.

A graph-theoretic approach to this problem involves enumerating the
\textit{complete matchings} of a structure.  A
complete matching consists of a set of pairings of sites, or
\textit{matchings}, which are connected (by a non-zero hopping term),
such that every site in the structure is included in exactly one pair.
Examples of such complete matchings are shown in Fig.
\ref{CoversExample}, for the four row graphene ribbon which we
investigate experimentally in section \ref{sec:level3}. A complete matching has an algebraic
interpretation: if we index separately the black and white sites such
that each each matched pair has the same index, the corresponding
hopping terms appear along the diagonal of the matrix $Q$. The terms
in the expansion of the determinant of a matrix correspond to the
product of the diagonal elements for every possible permutation of the
columns (or rows). Hence finding the set of complete matchings
for a structure gives all the terms in the expansion (to get the
signs, it is also necessary to keep track of the number of swaps
required to go between matchings).  This is the Harary-Sachs theorem
\cite{HarrarySachs1,HarrarySachs2} for weighted bipartite graphs.

A consequence of
the relationship between complete matchings and the expansion of $|Q|$
is that any connection which is not included in any matching does not appear 
in the expansion, so can be removed to reveal the sections. In the case of the structure in 
Fig. \ref{CoversExample}, all the connections between the rows can be removed, so the structure
has four sections. For each section, there are two complete matchings, so each factor $|q_i|$ of $|Q|$ 
contains two terms. Since these have opposite signs, every factor can be made to pass through
zero, so the structure has classification $4\mathbb{Z}_2$. Note that even small changes in 
connectivity can change this completely. If we add just one connection between the a white site of top row and 
a black site of the bottom row of the structure, we find that there are complete matchings which include every connection,
so there are no sections and the classification is $\mathbb{Z}_2$.

\begin{figure}[hbt!]
\centering
\includegraphics[width=\linewidth]{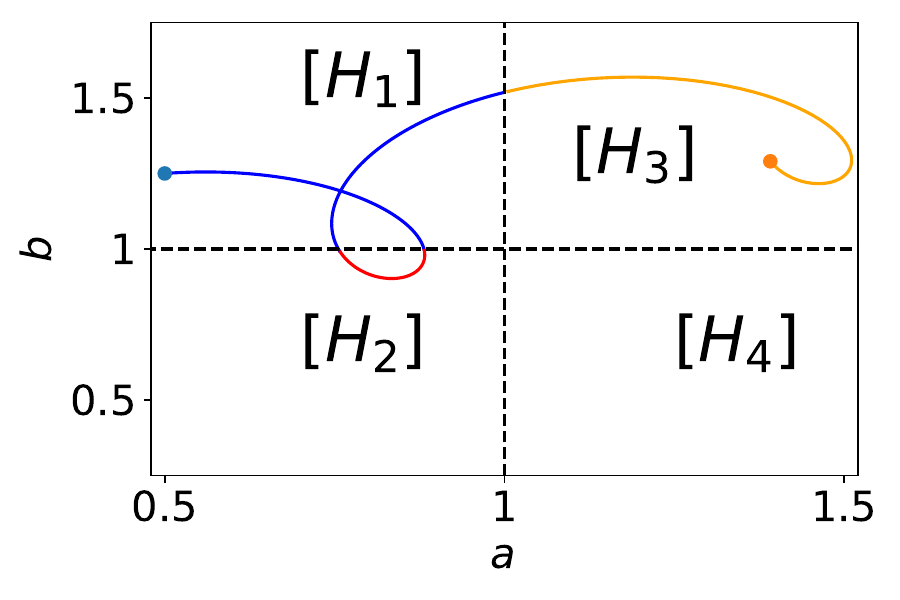}
\caption{A slice of the parameter space for 2 row ribbon zigzag graphene, which has 
the classification $2\mathbb{Z}_2$. This 
slice is defined by setting all the hopping terms equal to 1, apart from one in each
row,  
$a$ and $b$, which are allowed to evolve independently. 
$[H_x]$ denotes a set of topologically equivalent Hamiltonians, and the dashed 
lines denote phase boundaries. The path shown  undergoes 3 phase transitions between 
the phases $[H_1]$, $[H_2]$, and $[H_3]$, where each phase transition is identified
by the appearance of a pair of unavoidable degenerate zero energy states.}
\label{ParamSpaceSlice2rowribbongraphene}
\end{figure}

\begin{figure}[hbt!]
\centering
\includegraphics[width=0.28\linewidth]{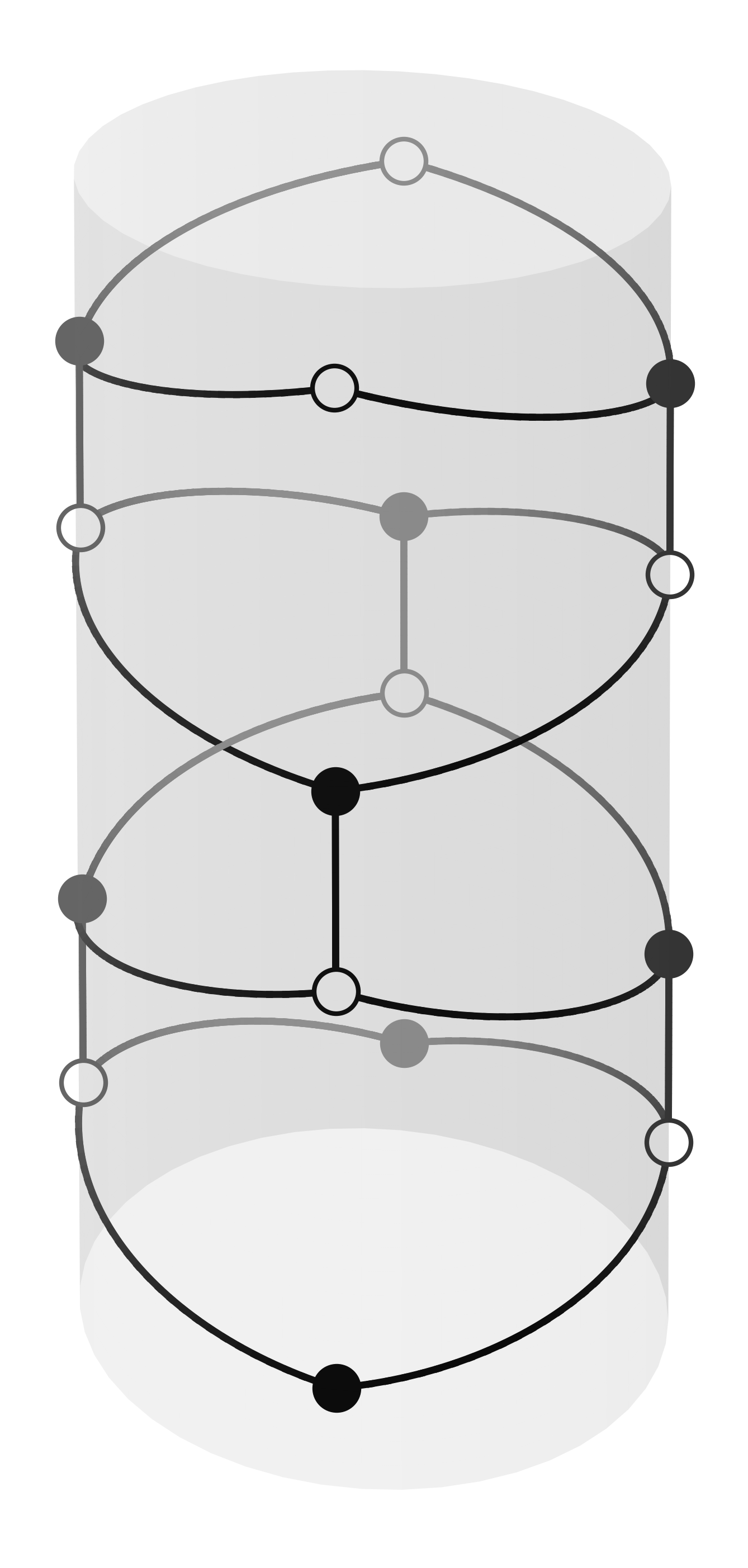} 
\hspace{1.5em}
\includegraphics[width=0.5\linewidth]{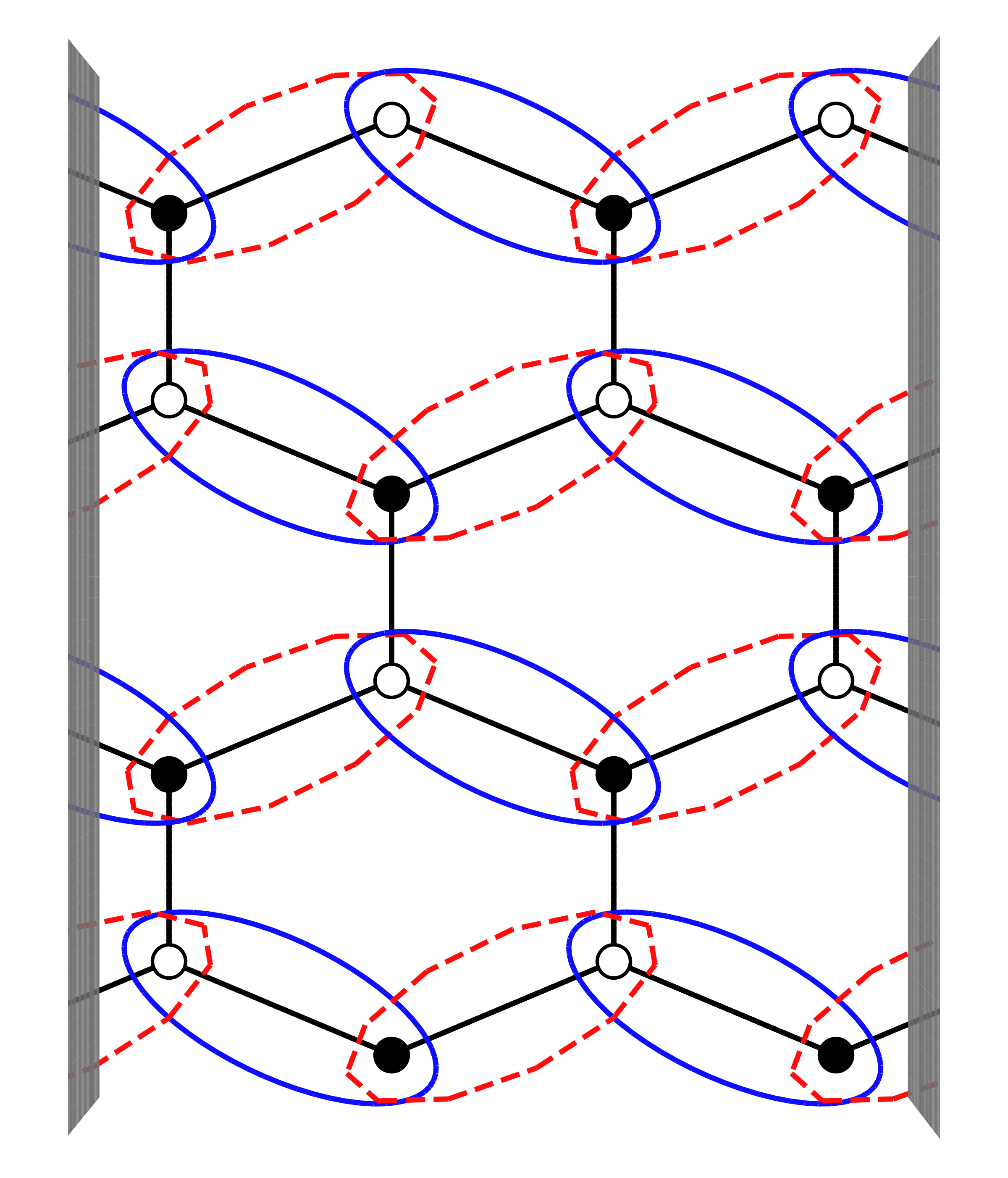}\\
(a) \hspace{9.5em} (b)

\caption{(a) The four row ribbon graphene structure looped to form a cylinder.
(b) The complete matchings of this structure, flattened for clarity. 
A complete matching consists of a solid or 
dashed set of matchings from each row. The choice for
each row is independent so there are $2^4$ complete matchings.
The connections between the rows do not appear in any of the 
matchings, so each row forms a separate section. the topological
classification for this structure is $4\mathbb{Z}_2$.
}
\label{CoversExample}
\end{figure}

The relationship between the number of sections of a structure and a block triangular form of 
$Q$ in Eq.(\ref{SimpleExampleHamiltonian}) can be made rigorous. In 
the supplementary material section A.1, we prove that the factorisation of $|Q|$ according to
sections, $|Q|=\prod_i |q_i|$, can be made if, and only if $Q$ can be written in the 
block triangular form 
\begin{equation} \label{TriangularBlockForm1}
Q = \begin{pmatrix}
q_j & \cdots & c_{i+1,j} & c_{i,j} \\
& \ddots & \vdots & \vdots \\
& & q_{i+1} & c_{i,i+1} \\
\text{\huge0} & & & q_i \\
\end{pmatrix}
\;,
\end{equation}
where the $q_i$ and $c_{ij}$ are matrix blocks. Therefore by finding an 
appropriate order of the sites the number of sections will be revealed 
in a structure.

The diagonal blocks, $q_i$, in 
Eq.(\ref{TriangularBlockForm1}) define the sections of the structure. They
include the hopping terms which connect the black and white sites within the section.
The off diagonal blocks, $c_{i,j}$, contain  the connections between the sections.
The hopping terms in the off diagonal blocks do not appear in the expansion of $|Q|$. Note that
the connecting terms only appear within superdiagonal blocks -- the corresponding 
subdiagonal is necessarily zero. Thus the connections between sections are always
between the sublattices. For instance if $Q$ maps from the black to the white 
sublattice, then only the white sites of $g_j$ may connect to the black sites of
$g_i$. We will make use of this shortly to define a partial ordering of the sections.

The relationship between the block structure of a matrix and the
topological classification of a chiral structure has allowed us to
find an algorithm to classify a random chiral network. Details of
this algorithm are given in the supplementary material section B. 
Of our classification
algorithm, the most numerically expensive part is finding a basis with
the largest number of triangular blocks. An alternative algorithm for
this part of the classification is detailed in
\cite{BlockTriangularFormSymmetricMatrices_ForAlgorithmDiscussion}
although we have not compared the complexity of our algorithm to this
alternative approach.

The factorisation and triangular form of $Q$ tells us more than just the topological
classification of the Hamiltonian. Physically the nature of
localisation at criticality (that is, at a topological phase boundary) is affected, with nullstates exactly
restricted to a particular subset of sections. We consider a 
structure represented by the graph $G$, with a section $g_i$ corresponding to the factor $q_i$.
For a topologically marginal chiral structure
each nullstate can be localised to the black or white
sub lattice. If only one section is singular, the black state may be
non-zero only on a subset $G_b$ of non-critical sections, and the
white state may be non-zero on a subset $G_w$ of non-critical
sections. The sets $G_w,\,\, G_b$ and the critical section $g_c$ are
disjoint, so that the only section with support of both the black and
white states is $g_c$. This non-trivial localisation yields an
experimental method to find the classification of a structure,
discussed in section \ref{sec:level3}.

To understand this localisation, we define a partial ordering
of sections, $(g\in G,\leq)$, which provides the upper triangular form in Eq.(\ref{TriangularBlockForm1}).
We say that $g_i< g_j$, or $g_i$ is \textit{lower} than $g_j$, 
if white sites of $g_i$ connect to black sites
of $g_j$, and $g_i> g_j$, or $g_i$ is \textit{higher} than $g_j$, 
if black
sites of $g_i$ connect to white sites of $g_j$. 
If the blocks $q_i$ and $q_j$ can be permuted in $Q$ to swap
places on the diagonal in such a way that maintains an upper
triangular block matrix $Q$, then $g_i=g_j$. Otherwise if $g_i$ and
$g_j$ are not directly connected, we look to neighbouring sections to
define the partial ordering. As the structure is connected, we may
always iterate to neighbours of neighbours  until
we have the partial ordering relationship between any two sections.

In order to demonstrate how the localisation of nullstates in a
critical section is altered by the partial ordering of sections,
consider a 4 section structure where $Q$ has the form
\begin{equation}
Q = \begin{pmatrix}
q_1 & c_{12} & c_{13} & 0 \\
0 & q_2 & 0 & c_{24} \\
0 & 0 & q_3 & c_{34} \\
0 & 0 & 0 & q_4
\end{pmatrix}
\end{equation}
where $q_i$ and $c_{i,j}$ are block matrices. The sections have the
partial ordering $g_1<g_2=g_3<g_4$. 
Suppose that section $g_3$ is marginal, so $|q_3| = 0$ and
$|q_1|,|q_2|,|q_4| \neq 0$. Then there exists a null eigenvector
$\ket{\phi}$ such that $q_3\ket{\phi} = 0$. The solution over all of
$Q$ is then given by

\begin{equation} \label{PartialOrderingDependance1}
\begin{pmatrix}
q_1 & c_{12} & c_{13} & 0 \\
0 & q_2 & 0 & c_{24} \\
0 & 0 & q_3 & c_{34} \\
0 & 0 & 0 & q_4
\end{pmatrix}
\begin{pmatrix}
-q_1^{-1}c_{13}\ket{\phi} \\
0 \\
\ket{\phi} \\
0
\end{pmatrix} = 0
\end{equation}
demonstrating the nullstate can have non-zero support on the black sites of $g_1$ and $g_3$ only. Similarly $|q_3^{\dagger}|=0$ so we have a similar solution for $Q^{\dagger}$ where
\begin{equation} \label{PartialOrderingDependance2}
\begin{pmatrix}
q_1^{\dagger} & 0 & 0 & 0 \\
c_{12}^{\dagger} & q_2^{\dagger} & 0 & 0 \\
c_{13}^{\dagger} & 0 & q_3^{\dagger} & 0 \\
0 & c_{24}^{\dagger} & c_{34}^{\dagger} & q_4^{\dagger}
\end{pmatrix}
\begin{pmatrix}
0 \\
0 \\
\ket{\psi} \\
-(q_4^{\dagger})^{-1}c_{34}^{\dagger}\ket{\psi}
\end{pmatrix} = 0
\end{equation}
and $q^{\dagger}_3\ket{\psi}=0$. Hence this nullstate has non-zero support on the white sites of
$g_3$ and $g_4$ only. The localisation of the support by sublattices
is a direct consequence of the partial ordering, in this
example $g_1<g_2=g_3<g_4$. These rules generalise in a straightforward way when there are more 
sections: the black zero energy state is localised on the marginal section and those lower
in the partial ordering, while the white state is confined to the marginal section and those
which are higher.

It should be noted that within the critical section itself, for a
randomly selected distribution of hopping terms that satisfy the
condition for a section to be critical, the zero energy states have, with
probability one, support on every site of the section. Proofs of this,
and of the general relationship between the factorisation and
localisation, are given in the supplementary material section A.1.

It is natural to ask what happens when a structure has more than one
critical section. In some instances it is possible to have more than
two zero energy states.  However, for most structures, for almost all
sets of hopping terms, there remain only two zero energy states when
there are multiple critical sections.  This is because the null states
originating from each critical section must be orthogonal to each
other, which imposes additional constraints on the hopping terms,
including those which connect the sections.  The conditions for such
higher nullity will be explored in future work.

\section{\label{sec:level3}Experimental demonstration of classification}

To demonstrate the topological classification defined in Section
\ref{sec:level3}, we have performed experiments on a coaxial cable
network which represents the 4 row ribbon graphene structure of
Fig. \ref{CoversExample}.  A network where all
coaxial cables have the same transmission time, $\tau$, maps on to a
tight binding model \cite{LinearPaper,DavidsArxivPaper} where
the sites are the junctions in the network.
The `energy', $\varepsilon$, is
given by $\varepsilon = \cos\omega \tau$ where $\omega$ is the driving
frequency. This yields a
Schr\"{o}dinger type eigenvalue equation $H\psi = \varepsilon\psi$.
Entries in $\psi$ correspond to scaled voltages at the junctions. 
Up to this scaling factor, individual hopping terms are given by
the reciprocal of the impedance of a cable connecting the sites.
Swapping between cables of different impedances  thus
allows us to traverse a structure's parameter space
$\xi$. 

Experimental measurements are made with a vector network analyser
(VNA) and takes two forms. From on-site reflectance
measurements, we determine the structure's impedance,
the real part of which is proportional to the local density of states (LDOS) at that
site \cite{LinearPaper}. We use this to demonstrate the localisation of the null states.
Transmittance measurement give the relationship between the
magnitude of a state on two sites. This gives a direct experimental determination
of the block triangular form of $Q$, Eq.(\ref{SimpleExampleHamiltonian}), and thus
the classification of the structure.

\subsubsection{Measuring the Local Density of States} \label{sec:LDOSexps}

Measuring the local density of states on every site allows us to
characterise the localisation properties of the zero energy states in a marginal structure
with multiple sections, of which only one is critical. 
Fig. \ref{LDOSexpRes} displays the LDOS measurements of the disordered
looped 4 row ribbon zigzag graphene in Fig. \ref{CoversExample}.
According to the classification of
Sec. \ref{sec:level2}, this structure has four non-trivials sections, corresponding to
the horizontal rows, so its topological
classification is $4\mathbb{Z}_2$.
Individual cables are chosen randomly from a binary distribution of
$50\Omega$ and $93\Omega$ cables (see more details in supplementary
material section C). One section is critical, the second row in the figure, while the other three are not.
The localisation of the nullstates is clear: the white state only has significant strength on 
the top two rows, while the black state appears only on the lower three. This is in agreement
with the predicted localisation, given the ordering of the rows $g_1 > g_2 >g_3 > g_4$.

\begin{figure}[hbt!]
\centering
\includegraphics[width=\linewidth]{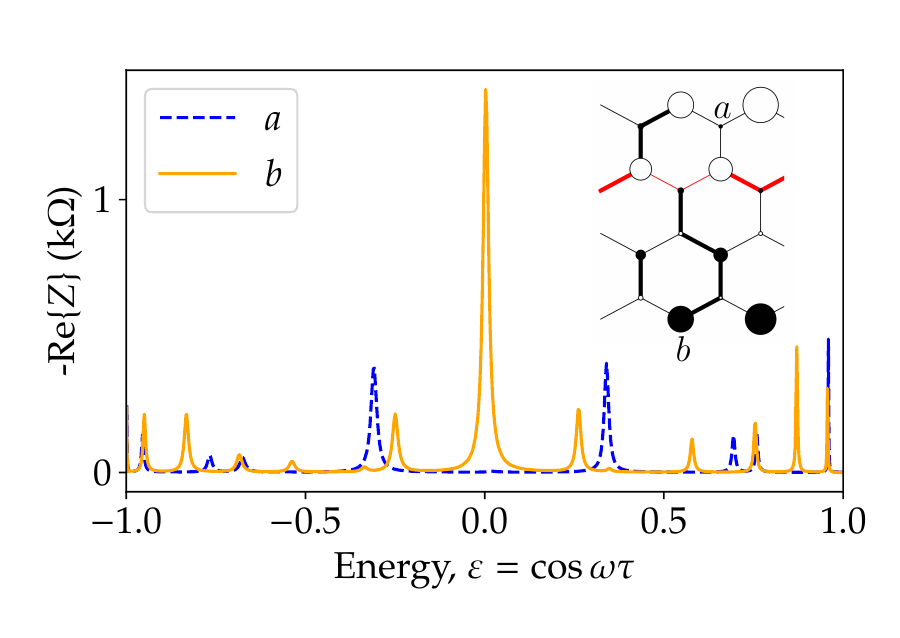}
\caption
{Experimental measurement of the impedance spectra in a topologically marginal 
4 row ribbon graphene structure. The strucure has four sections, correponding to the 
rows of the structure. Spectra are
shown for two sites on the black sublattice, labelled $a$ and $b$ in the structure diagram in the inset. 
In the plot of the structure the diameter of the circle representing each 
site is proportional the amplitude of the zero energy state at that point, which is derived 
from the strength of the corresponding impedance peak.
The widths of the lines showing the connections indicates the impedance of the corresponding 
cable, with wide lines being 93$\Omega$ and narrow lines
$50\Omega$ cables. The red lines pick out the second row which is the
critical section.  The localisation of the zero energy
eigenstates onto distinct sublattices on the upper and lower sections relative to the critical row 
agrees with our theoretical prediction. Note that the structure has
cylindrical boundary conditions as displayed in Fig.
\ref{CoversExample} (b), with the hanging connections at either side linked together.}
\label{LDOSexpRes}
\end{figure}

\subsubsection{Topological Classification through Transmittance}

Measurements of the  transmittance can be used to verify the triangular form of $Q$, 
Eq.(\ref{SimpleExampleHamiltonian}), and thus the division of the structure
into sections. We describe here an experiment where we make a cut in 
each section of the structure, by disconnecting one end of a cable, and measure 
transmittance between various cuts. We show that transmittance at 
zero energy ($\cos\omega\tau = 0$) only occurs when the measurement is within one
section: the transmittance between sections is zero.

In order to describe the transmittance of the coaxial cable network,
we use a transfer matrix formalism to relate the voltages and currents
at the output sites to those at the input sites. If we cut a site on
every section (creating an input and output site on each section), the
transfer matrix will have dimensions $2\tilde{N} \times 2\tilde{N}$, corresponding to
the voltage and current variables for each of the $\tilde{N}$ sections. Here we use $\tilde{N}$ for the total number of sections, to
distinguish from $N$, the number of topologically non-trivial sections
from earlier. 

For a chiral structure at zero energy, all
the variables can be coloured black or white, according to the two
sublattices, in such a way that the matrix consists of two $\tilde{N} \times
\tilde{N}$ diagonal blocks. At zero energy the voltage at a given site is determined entirely by the
currents flowing out of the neighbouring sites. The neighbours in a
chiral structure are on the other sublattice, so, if we assign the voltage the same colour as its site, and the currents flowing into and out of a site the 
opposite colour to the
site, the variables of the two colours are completely independent,
giving the two blocks.

To see how the form of the transfer matrix for the cut structure is related to the
sections, consider making a cut within a section, forming an input and
output site.  The cut adds a site to the structure, unbalancing the
black and white sublattices in the section, and thus creating a new
zero energy state. For instance cutting on a white site creates one
extra white site, resulting in a zero energy state that only has
support on the white sites of that section and of any higher sections
in the partial ordering.  This means that the block triangular form of
the matrix $Q$ in Eq.(\ref{SimpleExampleHamiltonian}) translates to a
triangular form of each block of the transfer matrix. 
This triangular form persists regardless of the site at which the cut is made 
in each section; it can shown to exits if, and only if, the uncut structure 
has at least $\tilde{N}$ sections.

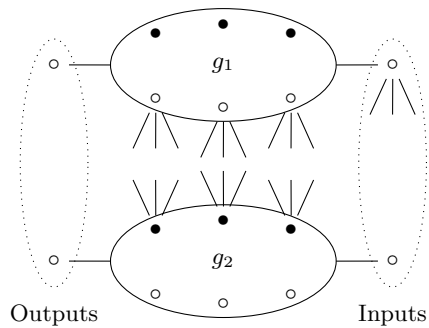
\begin{figure}[t]
\centering
\begin{tikzpicture}[scale=1.5]

\node (g2) at (0+5+5,+3.48) {$g_2$};

\node (Cw1) at (0+5+5,0.36+3.48) {$\bullet$};
\node (Cw2) at (-0.6+5+5,0.28+3.48) {$\bullet$};
\node (Cw3) at (0.6+5+5,0.28+3.48) {$\bullet$};

\node (Cb1) at (0+5+5,-0.38+3.48) {$\circ$};
\node (Cb2) at (-0.6+5+5,-0.3+3.48) {$\circ$};
\node (Cb3) at (0.6+5+5,-0.3+3.48) {$\circ$};

\draw (Cw1) -- (0+5+5,0.8+3.48);
\draw (Cw1) -- (0.2+5+5,0.8+3.48);
\draw (Cw1) -- (-0.2+5+5,0.8+3.48);

\draw (Cw2) -- (-0.6+5+5,0.72+3.48);
\draw (Cw2) -- (0.2-0.6+5+5,0.72+3.48);
\draw (Cw2) -- (-0.2-0.6+5+5,0.72+3.48);

\draw (Cw3) -- (+0.6+5+5,0.72+3.48);
\draw (Cw3) -- (0.2+0.6+5+5,0.72+3.48);
\draw (Cw3) -- (-0.2+0.6+5+5,0.72+3.48);

\node (Ccw1) at (1.5+5+5,0+3.48) {$\circ$};
\draw (Ccw1) -- (1+5+5,0+3.48);

\node (Ccw2) at (-1.5+5+5,0+3.48) {$\circ$};
\draw (Ccw2) -- (-1+5+5,0+3.48);


\draw (0+5+5,+3.48) circle [x radius=1, y radius=0.5, rotate=0];

\node (g1) at (0+5+5,+5.22) {$g_1$};

\node (Dw1) at (0+5+5,0.36+5.22) {$\bullet$};
\node (Dw2) at (-0.6+5+5,0.28+5.22) {$\bullet$};
\node (Dw3) at (0.6+5+5,0.28+5.22) {$\bullet$};

\node (Db1) at (0+5+5,-0.38+5.22) {$\circ$};
\node (Db2) at (-0.6+5+5,-0.3+5.22) {$\circ$};
\node (Db3) at (0.6+5+5,-0.3+5.22) {$\circ$};

\draw (Db1) -- (0+5+5,-0.82+5.22);
\draw (Db1) -- (0.2+5+5,-0.82+5.22);
\draw (Db1) -- (-0.2+5+5,-0.82+5.22);

\draw (Db2) -- (-0.6+5+5,-0.74+5.22);
\draw (Db2) -- (0.2-0.6+5+5,-0.74+5.22);
\draw (Db2) -- (-0.2-0.6+5+5,-0.74+5.22);

\draw (Db3) -- (+0.6+5+5,-0.74+5.22);
\draw (Db3) -- (0.2+0.6+5+5,-0.74+5.22);
\draw (Db3) -- (-0.2+0.6+5+5,-0.74+5.22);

\node (Dcw1) at (1.5+5+5,0+5.22) {$\circ$};
\draw (Dcw1) -- (1+5+5,0+5.22);

\node (Dcw2) at (-1.5+5+5,0+5.22) {$\circ$};
\draw (Dcw2) -- (-1+5+5,0+5.22);

\draw (Dcw1) -- (5+1.5+5,-0.44+5.22);
\draw (Dcw1) -- (0.2+5+1.5+5,-0.44+5.22);
\draw (Dcw1) -- (-0.2+5+1.5+5,-0.44+5.22);

\draw (0+5+5,+5.22) circle [x radius=1, y radius=0.5, rotate=0];

\draw[dotted] (11.5,+4.35) circle [x radius=0.3, y radius=1.1, rotate=0];
\draw[dotted] (8.5,+4.35) circle [x radius=0.3, y radius=1.1, rotate=0];

\node (inputs) at (11.5,3) {Inputs};
\node (inputs) at (8.5,3) {Outputs};
\end{tikzpicture}
\caption
{A structure with two sections, $g_1$ and $g_2$, with the partial
ordering of $g_1<g_2$ since white sites in $g_1$ connect to black
sites in $g_2$. The sections correspond to the  ellipses, and contain
arbitrary numbers of sites, which are not shown individually. The
hopping terms connecting the two sections are denoted schematically by bunches of three lines, but
they may join any white sites of $g_1$ to any black sites of $g_2$.
Each section has been cut  to create  input and output sites for transmittance measurements.}
\label{PotatoDiagram}
\end{figure}

As a simple example, consider a structure with just two sections, $g_1$ and $g_2$, as shown 
in Fig. \ref{PotatoDiagram}. The white sites in $g_1$ connect to the black sites in $g_2$, so in the
partial ordering,  $g_1<g_2$. Each section is cut at a white site, so the white block of the
transfer matrix relates the input and output voltages, while the black block connects the currents.
The transfer matrix is thus
\begin{equation} \label{experimentperfectshort}
\begin{pmatrix}
I^{g_1}_{\text{out}} \\
I^{g_2}_{\text{out}} \\
V^{g_1}_{\text{out}} \\
V^{g_2}_{\text{out}} \\
\end{pmatrix} = 
\begin{pmatrix}
\gamma & \delta & 0 & 0 \\
0 & \beta & 0 & 0 \\
0 & 0 & \alpha & 0 \\
0 & 0 & \nu & \mu
\end{pmatrix}
\begin{pmatrix}
I^{g_1}_{\text{in}} \\
I^{g_2}_{\text{in}} \\
V^{g_1}_{\text{in}} \\
V^{g_2}_{\text{in}} \\
\end{pmatrix},
\end{equation}
where the non-zero matrix entries represented by Greek letters are functions of the hopping 
amplitudes in the Hamiltonian, and depend on the details of the actual structure.

It is not possible to measure directly the form of such a transfer matrix
using a two-port VNA, which can only give the transmittance between
one input and one output site. The voltages and currents at the other
input and output sites cannot simply be made zero: we can either leave them open
circuit, in which case there can be a voltage on the site, but no current
flowing in or out, or they can be shorted, giving a current but no voltage. In our experiment,
the output site is always open circuit, and the choice of whether to short or leave open the 
input site is made according to the partial ordering. 

To see how this works, consider the case where the input and
output are connected to section $g_1$ and the input to $g_2$ is open circuit.
Then $I_\text{in}^{g_2}=0$ and $I_\text{out}^{g_2}=0$. From
Eq.(\ref{experimentperfectshort}), we get $I_\text{out}^{g_1}=\gamma
I_\text{in}^{g_1}$ and $V_\text{out}^{g_1}=\alpha V_\text{in}^{g_1}$,
so non-zero transmittance occurs.  However, if the input to $g_2$ is shorted, the
boundary conditions instead become $V_\text{in}^{g_2}=0$ and
$I_\text{out}^{g_2}=0$. This gives $I_\text{out}^{g_1}=0$ and
$V_\text{out}^{g_1}=\alpha V_\text{in}^{g_1}$, so no transmittance can
be seen, because this requires both the output voltage and current to
be non-zero. Hence, for our classification experiment to work, with
intra-section transmittance non-zero, we need the $g_2$ input site to be
open circuit.

Proceeding in the same way, we find that if the input port is
connected to the input site of $g_1$ and the output port to the output
site of $g_2$, we get non-zero transmittance if the input to $g_2$ is shorted,
but not if it is open circuit. Thus, to see no transmittance between
sections, we again need $g_2$ to be open circuit.

If we 
do these experiments with input port connected to the input site of $g_2$, 
the requirement is reversed: the $g_1$ input site
has to be shorted to get non-zero transmittance within the section $g_2$, and no transmittance between sections $g_1$ and $g_2$.
The reason for this difference is the partial ordering of the sections $g_1<g_2$. However, these rules only
work because we have cut the sections on the white sites. Going through the different cases,
we find that the requirement for shorting or leaving open the unused input sites depends on the sublattice
of the site and the position of the section in the 
partial ordering relative to the input site connected to the VNA. These requirements are summarised in 
Table \ref{Tab:TableOfInputStates}. When these rules are satisfied, there is non-zero transmittance
between the input and output sites when they are on the same section, but not when they are
on different sections.  This pattern is a direct consequence
of the triangular blocks in the transfer matrix, and does not occur otherwise. 
For example, in
a structure without sections but with two cuts on white sites, shorting one of
the input sites results in no transmittance for either output site.

\begin{table}[hbtp!]
\centering
\begin{tabular}{|c|c|c|c|}
\hline
Input site & Higher & Lower & Equal \\
\hline
Black & Short & Open & Open/Short \\
White & Open & Short & Open/Short \\
\hline
\end{tabular}
\caption{Conditions for shorting or leaving open circuit the inputs to the sections
where the input site is not connected to the VNA. The requirements depend on the sublattice
which the input belongs to, black or white, and its position in the partial ordering,
higher or lower than the connected section. With these choices, non-zero transmittance 
occurs only when the output site is in the same section as the input.}
\label{Tab:TableOfInputStates}
\end{table}

\begin{figure*}[hbtp!] \centering
\includegraphics[width=0.8\paperwidth]{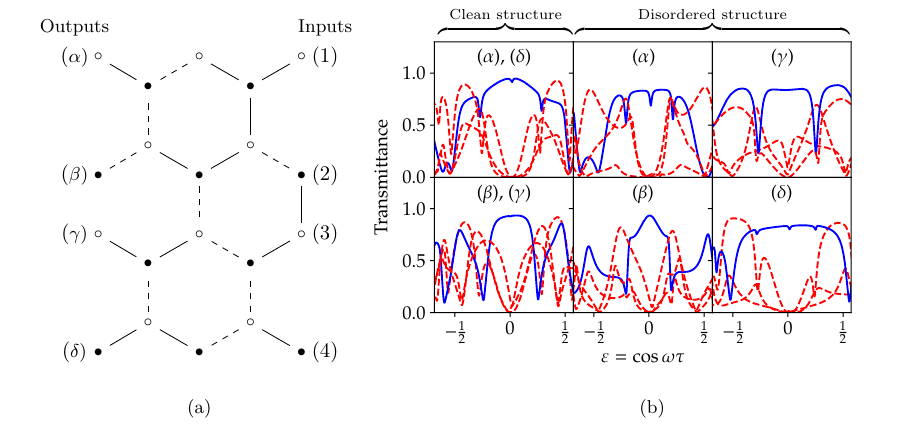}
\caption{ (a) The looped four row zigzag ribbon structure cut so it forms a flat sheet.
The looped structure is predicted to consist of four sections, 
corresponding to the four rows of sites.
Input sites for the transmittance measurements are
labelled on the right hand
side, and outputs on the left hand side. For the disordered structure,
dashed lines indicate 93$\Omega$ RG62 cables, while solid lines
indicate 50$\Omega$ RG58 cables. The clean structure had only
50$\Omega$ RG58 cables. (b) Experimental transmittance data. 
Each subplot corresponds to a different output site according to the labels in (a),
with spectra shown for all four inputs in each case. The blue spectrum
is for the case where the input and
output site are on the same section, with the dashed red spectra corresponding to 
transmittance between sections.
The transmittance at zero energy is expected to be zero if the input and output
sites are in different sections, but non-zero when they are in the same section.
The experiment thus confirms that each row is a separate section. 
}
\label{TransmittanceDataExps}
\end{figure*}

Having established these rules, we use transmittance
measurements to verify the presence of the sections in the looped four row
zigzag graphene ribbon shown in Fig. \ref{CoversExample} (which has been used as an example elsewhere in this paper). Recall that this
structure has four sections, corresponding to the four rows of sites.
Where the subscript of each row is sequentially labelled, the partial ordering of the sections 
is $g_1>g_2>g_3>g_4$. A cut is made in each
row, so the looped structure is transformed into a sheet, shown in 
Fig. \ref{TransmittanceDataExps} (a), creating an extra site on each row. Note, however, that
 the sections
we find correspond to the uncut loop rather than the sheet. In two of the sections,
the cut is on a white site, while in the other two the site is black.
We perform transmittance measurements going through the four output sites,  
and measuring a spectrum for every input sites in each case. The rules
in Table \ref{Tab:TableOfInputStates} determine whether the unconnected
input sites are left open circuit or shorted. For example, when the VNA
is connected to the input site on the second row, the first row, which is higher
in the ordering and has a white input site, is left open, while the third and fourth rows
are, respectively, shorted and open circuit. 
Using  the transfer matrix for the clean structure, where all the cables have impedance
$Z_0=50\Omega$,  this input
gives
\begin{widetext}
\begin{equation} \label{disorderedtransfermatrix} \begin{split} 
\begin{pmatrix}
V^{g_1}_{\text{out}} \\
\boldsymbol{{\color{red}I^{g_2}_{\text{out}}}} \\
V^{g_3}_{\text{out}} \\
I^{g_4}_{\text{out}} \\
I^{g_1}_{\text{out}} \\
\boldsymbol{{\color{red}V^{g_2}_{\text{out}}}} \\
I^{g_3}_{\text{out}} \\
V^{g_4}_{\text{out}} 
\end{pmatrix} = 
\begin{pmatrix}
1 & Z_0 i & -1 & - Z_0 i &  &  &  &  \\
0 & 1 & \frac{1}{Z_0} i & -1 &  &  &  &  \\
0 & 0 & 1 & Z_0 i &  &  &  &  \\
0 & 0 & 0 & 1 &  &  &  & \\
 &  &  &  & 1 & 0 & 0 & 0 \\
 &  &  &  & Z_0 i & 1 & 0 & 0 \\
 &  &  &  & -1 & \frac{1}{Z_0}i & 1 & 0 \\
 &  &  &  & -Z_0 i & -1 & Z_0 i & 1
\end{pmatrix}
\begin{pmatrix}
V^{g_1} \\
\boldsymbol{{\color{red}I^{g_2}_{\text{in}}}} \\
0 \\
0 \\
0 \\
\boldsymbol{{\color{red}V^{g_2}_{\text{in}}}} \\
I^{g_3} \\
V^{g_4}
\end{pmatrix}
=
\begin{pmatrix}
V^{g_1} + Z_0i I^{g_2}_{\text{in}} \\
\boldsymbol{{\color{red}I^{g_2}_{\text{in}}}} \\
0 \\
0 \\
0 \\
\boldsymbol{{\color{red}V^{g_2}_{\text{in}}}} \\
I^{g_3} + \frac{1}{Z_0}i V^{g_2}_{\text{in}} \\
V^{g_4} + Z_0i I^{g_3} - V^{g_2}_{\text{in}}
\end{pmatrix} 
\;.
\end{split} \end{equation}
\end{widetext}
For the disordered structure, the blocks of the matrix are still triangular, but the
expression for the elements are more complicated. The numerical value for this matrix if given in the supplementary material section C. 

If the other port of the VNA is not attached to the output site on 
the second row, it is left open circuit, that is $I_\text{out}^{g_2}=0$.
However, this forces $I_\text{in}^{g_2}$ also to be zero, so no current
enters the structure, and no transmittance occurs. However, for the output
on the same row, $I_\text{out}^{g_2}=I_\text{in}^{g_2}$ and 
$V_\text{out}^{g_2}=V_\text{in}^{g_2}$, so, in the absence of absorption, 
there is a perfect transmittance of 1. 

The transmittance experiment was carried out twice, first using a `clean' structure with 
all 50$\Omega$ (RG58)
cables, and then for a disordered structure where hopping terms were
randomly selected from a binary distribution of 50$\Omega$ (RG58) and
93$\Omega$ (RG62) cables, identical to the one in Sec. \ref{sec:LDOSexps}.
The clean structure has reflection symmetry about a line between the second and third rows,
so we need only need to use two outputs, $\alpha$ and $\beta$ (labelled in Fig.\ref{TransmittanceDataExps} (a)), but in each case we measure
transmittance for all four input sites. In the disordered structure this symmetry is broken, 
requiring measurements for every input and output site in order to complete the experiment. 

The measured transmittance spectra are shown in Fig.
\ref{TransmittanceDataExps} (b). Each panel shows data for a particular output site, with the
four spectra corresponding to the different input sites.
As predicted, in both the clean and
disordered structures, the transmittance at
$\varepsilon=\cos\omega\tau=0$ is non-zero only when the output site
is on the same section as the input (blue curves). This confirms
that the structure has $4$ sections, experimentally verifying the
$4\mathbb{Z}_2$ classification of this structure.  The actual value of the
zero-energy transmittance between the sites within the section is determined by how close 
the section is to being topologically marginal. For a marginal section, the ideal
transmittance, in the absence of losses, would be expected to have a value of unity \cite{LinearPaper}.
Although the differences are not great, it can be seen that the transmittance is highest in the 
clean structure, where all the sections are marginal, and for the second row section
of the disordered structure (output (b)), which is also marginal.

As the structure
has chiral symmetry, we expect a symmetry in the transmittance spectra
around zero energy. The slight asymmetry in the data is mainly a result of
losses in the cables, which have more effect at higher frequencies,
but there is also some chiral symmetry breaking due to imperfections in the structure.
These necessarily occur because the mapping of the coaxial cable
structure to the tight binding model requires the sections of cable
between the sites to be of uniform impedance. However, the SMA
connectors which form the junctions between the cables are $50\Omega$
components, so for the connections with $93\Omega$ cables this
uniformity is necessarily unachievable. As a result, the symmetry breaking
is generally greater for the disordered structure than the clean one, where minor 
variations in the cable lengths are the likely cause.

\section{\label{sec:level4}Conclusion}

We have described an approach to determining exact topological phases in finite chiral
structures, identifying the phase boundaries by the appearance of degenerate pairs
of zero energy states. This has been shown, in many cases, to lead to a rich topological 
classification, obtained by finding a division into sections which correspond to
irreducible factors of the determinant of the Hamiltonian. The topological 
classification is then $N\mathbb{Z}_2$, where $N$ is the number of topologically 
non-trivial sections or factors. The sections can be identified using 
the complete matchings of the underlying structure,  
relating the topological classification to the structures underlying connectivity. 
Each complete matching is related to a term in the expansion of the determinant, so a
hopping term which does not appear in any matching can be omitted without changing the determinant. 
The sections correspond to parts of the structure which become separated when these 
connections are removed. We also give, in section B of the supplementary material, a simple computational 
algorithm for finding the sections of a structure.

We have defined a partial ordering of the sections in a structure 
which gives rise to an unusual localisation of the zero energy states which are present
when one of the sections is topologically marginal. The zero energy states can be separated
so they each have support on a single sublattice. With our definition, the white
state is confined to the marginal section and those which are higher in the ordering, while the 
black state appears on the marginal section and lower sections. This localisation can be
seen as a property of finite structures which has some equivalence to the bulk boundary 
correspondence in infinite structures. 

This localisation provides a way in which the sections, and corresponding topological 
classification can be demonstrated experimentally. We have performed such experiments on
simple coaxial cable networks which map directly onto chiral tight binding models. When
we make one section marginal, we can use impedance measurements to map out the local density 
of states on each site. This provides a direct demonstration of the predicted localisation
related to the partial ordering. Even without a marginal section, we can use transmittance
measurements to show that the structure separates into the expected sections.  
We have used this method to confirm the $4\mathbb{Z}_2$ classification which our theory
predicts for a four row ribbon graphene structure.

\section*{Acknowledgements} 

We are
enormously grateful for help in the experimental work from Qingqing
Duan, Ben Kinvig, and Elena Callus, and also to Belle Darling and
Phillip Graham for huge help in finding space to run the experiments.
Many thanks also to Patrick Fowler, Barry Pickup, Qingqing Duan, Ben
Kinvig, and Elena Callus for many illuminating and insightful
discussions while completing this work.

\bibliography{classificationfinal}

\begin{thebibliography}{35}%
\makeatletter
\providecommand \@ifxundefined [1]{%
 \@ifx{#1\undefined}
}%
\providecommand \@ifnum [1]{%
 \ifnum #1\expandafter \@firstoftwo
 \else \expandafter \@secondoftwo
 \fi
}%
\providecommand \@ifx [1]{%
 \ifx #1\expandafter \@firstoftwo
 \else \expandafter \@secondoftwo
 \fi
}%
\providecommand \natexlab [1]{#1}%
\providecommand \enquote  [1]{``#1''}%
\providecommand \bibnamefont  [1]{#1}%
\providecommand \bibfnamefont [1]{#1}%
\providecommand \citenamefont [1]{#1}%
\providecommand \href@noop [0]{\@secondoftwo}%
\providecommand \href [0]{\begingroup \@sanitize@url \@href}%
\providecommand \@href[1]{\@@startlink{#1}\@@href}%
\providecommand \@@href[1]{\endgroup#1\@@endlink}%
\providecommand \@sanitize@url [0]{\catcode `\\12\catcode `\$12\catcode
  `\&12\catcode `\#12\catcode `\^12\catcode `\_12\catcode `\%12\relax}%
\providecommand \@@startlink[1]{}%
\providecommand \@@endlink[0]{}%
\providecommand \url  [0]{\begingroup\@sanitize@url \@url }%
\providecommand \@url [1]{\endgroup\@href {#1}{\urlprefix }}%
\providecommand \urlprefix  [0]{URL }%
\providecommand \Eprint [0]{\href }%
\providecommand \doibase [0]{https://doi.org/}%
\providecommand \selectlanguage [0]{\@gobble}%
\providecommand \bibinfo  [0]{\@secondoftwo}%
\providecommand \bibfield  [0]{\@secondoftwo}%
\providecommand \translation [1]{[#1]}%
\providecommand \BibitemOpen [0]{}%
\providecommand \bibitemStop [0]{}%
\providecommand \bibitemNoStop [0]{.\EOS\space}%
\providecommand \EOS [0]{\spacefactor3000\relax}%
\providecommand \BibitemShut  [1]{\csname bibitem#1\endcsname}%
\let\auto@bib@innerbib\@empty
\bibitem [{\citenamefont {Chen}\ and\ \citenamefont
  {Chiou}(2020)}]{LocalisedBoundaryModes}%
  \BibitemOpen
  \bibfield  {author} {\bibinfo {author} {\bibfnamefont {B.-H.}\ \bibnamefont
  {Chen}}\ and\ \bibinfo {author} {\bibfnamefont {D.-W.}\ \bibnamefont
  {Chiou}},\ }\bibfield  {title} {\bibinfo {title} {{An elementary rigorous
  proof of bulk-boundary correspondence in the generalized
  {S}u-{S}chrieffer-{H}eeger model}},\ }\href
  {https://doi.org/https://doi.org/10.1016/j.physleta.2019.126168} {\bibfield
  {journal} {\bibinfo  {journal} {Physics Letters A}\ }\textbf {\bibinfo
  {volume} {384}},\ \bibinfo {pages} {126168} (\bibinfo {year}
  {2020})}\BibitemShut {NoStop}%
\bibitem [{\citenamefont {Thouless}\ \emph {et~al.}(1982)\citenamefont
  {Thouless}, \citenamefont {Kohmoto}, \citenamefont {Nightingale},\ and\
  \citenamefont {den Nijs}}]{HallTopological}%
  \BibitemOpen
  \bibfield  {author} {\bibinfo {author} {\bibfnamefont {D.~J.}\ \bibnamefont
  {Thouless}}, \bibinfo {author} {\bibfnamefont {M.}~\bibnamefont {Kohmoto}},
  \bibinfo {author} {\bibfnamefont {M.~P.}\ \bibnamefont {Nightingale}},\ and\
  \bibinfo {author} {\bibfnamefont {M.}~\bibnamefont {den Nijs}},\ }\bibfield
  {title} {\bibinfo {title} {{Quantized Hall Conductance in a Two-Dimensional
  Periodic Potential}},\ }\href {https://doi.org/10.1103/PhysRevLett.49.405}
  {\bibfield  {journal} {\bibinfo  {journal} {Phys. Rev. Lett.}\ }\textbf
  {\bibinfo {volume} {49}},\ \bibinfo {pages} {405} (\bibinfo {year}
  {1982})}\BibitemShut {NoStop}%
\bibitem [{\citenamefont {Haldane}(1988)}]{Haldane}%
  \BibitemOpen
  \bibfield  {author} {\bibinfo {author} {\bibfnamefont {F.~D.~M.}\
  \bibnamefont {Haldane}},\ }\bibfield  {title} {\bibinfo {title} {{Model for a
  Quantum Hall Effect without Landau Levels: Condensed-Matter Realization of
  the ``Parity Anomaly''}},\ }\href
  {https://doi.org/10.1103/PhysRevLett.61.2015} {\bibfield  {journal} {\bibinfo
   {journal} {Phys. Rev. Lett.}\ }\textbf {\bibinfo {volume} {61}},\ \bibinfo
  {pages} {2015} (\bibinfo {year} {1988})}\BibitemShut {NoStop}%
\bibitem [{\citenamefont {B\"{u}ttiker}(1988)}]{NoBackScattering1}%
  \BibitemOpen
  \bibfield  {author} {\bibinfo {author} {\bibfnamefont {M.}~\bibnamefont
  {B\"{u}ttiker}},\ }\bibfield  {title} {\bibinfo {title} {{Absence of
  backscattering in the quantum Hall effect in multiprobe conductors}},\ }\href
  {https://doi.org/10.1103/PhysRevB.38.9375} {\bibfield  {journal} {\bibinfo
  {journal} {Phys. Rev. B}\ }\textbf {\bibinfo {volume} {38}},\ \bibinfo
  {pages} {9375} (\bibinfo {year} {1988})}\BibitemShut {NoStop}%
\bibitem [{\citenamefont {Wang}\ \emph {et~al.}(2009)\citenamefont {Wang},
  \citenamefont {Chong}, \citenamefont {Chong}, \citenamefont {Joannopoulos},\
  and\ \citenamefont {Solja\v{c}i\'{c}}}]{NoBackScattering}%
  \BibitemOpen
  \bibfield  {author} {\bibinfo {author} {\bibfnamefont {Z.}~\bibnamefont
  {Wang}}, \bibinfo {author} {\bibfnamefont {Y.}~\bibnamefont {Chong}},
  \bibinfo {author} {\bibfnamefont {Y.}~\bibnamefont {Chong}}, \bibinfo
  {author} {\bibfnamefont {J.~D.}\ \bibnamefont {Joannopoulos}},\ and\ \bibinfo
  {author} {\bibfnamefont {M.}~\bibnamefont {Solja\v{c}i\'{c}}},\ }\bibfield
  {title} {\bibinfo {title} {{Observation of unidirectional
  backscattering-immune topological electromagnetic states}},\ }\href
  {https://api.semanticscholar.org/CorpusID:4427789} {\bibfield  {journal}
  {\bibinfo  {journal} {Nature}\ }\textbf {\bibinfo {volume} {461}},\ \bibinfo
  {pages} {772} (\bibinfo {year} {2009})}\BibitemShut {NoStop}%
\bibitem [{\citenamefont {Prodan}(2016)}]{BulkBoundary1}%
  \BibitemOpen
  \bibfield  {author} {\bibinfo {author} {\bibfnamefont {E.}~\bibnamefont
  {Prodan}},\ }\bibinfo {title} {{Bulk and Boundary Invariants for Complex
  Topological Insulators: From {K}-Theory to Physics}}\ (\bibinfo  {publisher}
  {Springer International Publishing},\ \bibinfo {year} {2016})\BibitemShut
  {NoStop}%
\bibitem [{\citenamefont {Simon}(1983)}]{OGbulkBoundaryOrigin}%
  \BibitemOpen
  \bibfield  {author} {\bibinfo {author} {\bibfnamefont {B.}~\bibnamefont
  {Simon}},\ }\bibfield  {title} {\bibinfo {title} {{Holonomy, the Quantum
  Adiabatic Theorem, and {B}erry's Phase}},\ }\href
  {https://doi.org/10.1103/PhysRevLett.51.2167} {\bibfield  {journal} {\bibinfo
   {journal} {Phys. Rev. Lett.}\ }\textbf {\bibinfo {volume} {51}},\ \bibinfo
  {pages} {2167} (\bibinfo {year} {1983})}\BibitemShut {NoStop}%
\bibitem [{\citenamefont {Kitaev}(2009)}]{10FoldWay}%
  \BibitemOpen
  \bibfield  {author} {\bibinfo {author} {\bibfnamefont {A.}~\bibnamefont
  {Kitaev}},\ }\bibfield  {title} {\bibinfo {title} {{Periodic table for
  topological insulators and superconductors}},\ }\href
  {https://doi.org/10.1063/1.3149495} {\bibfield  {journal} {\bibinfo
  {journal} {AIP Conference Proceedings}\ }\textbf {\bibinfo {volume} {1134}},\
  \bibinfo {pages} {22} (\bibinfo {year} {2009})},\ \Eprint
  {https://arxiv.org/abs/https://pubs.aip.org/aip/acp/article-pdf/1134/1/22/11584243/22\_1\_online.pdf}
  {https://pubs.aip.org/aip/acp/article-pdf/1134/1/22/11584243/22\_1\_online.pdf}
  \BibitemShut {NoStop}%
\bibitem [{\citenamefont {Altland}\ and\ \citenamefont
  {Zirnbauer}(1997)}]{AltlandZirnbauerSymmetries}%
  \BibitemOpen
  \bibfield  {author} {\bibinfo {author} {\bibfnamefont {A.}~\bibnamefont
  {Altland}}\ and\ \bibinfo {author} {\bibfnamefont {M.~R.}\ \bibnamefont
  {Zirnbauer}},\ }\bibfield  {title} {\bibinfo {title} {{Nonstandard symmetry
  classes in mesoscopic normal-superconducting hybrid structures}},\ }\href
  {https://doi.org/10.1103/PhysRevB.55.1142} {\bibfield  {journal} {\bibinfo
  {journal} {Phys. Rev. B}\ }\textbf {\bibinfo {volume} {55}},\ \bibinfo
  {pages} {1142} (\bibinfo {year} {1997})}\BibitemShut {NoStop}%
\bibitem [{\citenamefont {Cano}\ \emph {et~al.}(2018)\citenamefont {Cano},
  \citenamefont {Bradlyn}, \citenamefont {Wang}, \citenamefont {Elcoro},
  \citenamefont {Vergniory}, \citenamefont {Felser}, \citenamefont {Aroyo},\
  and\ \citenamefont {Bernevig}}]{TopQuantChem1}%
  \BibitemOpen
  \bibfield  {author} {\bibinfo {author} {\bibfnamefont {J.}~\bibnamefont
  {Cano}}, \bibinfo {author} {\bibfnamefont {B.}~\bibnamefont {Bradlyn}},
  \bibinfo {author} {\bibfnamefont {Z.}~\bibnamefont {Wang}}, \bibinfo {author}
  {\bibfnamefont {L.}~\bibnamefont {Elcoro}}, \bibinfo {author} {\bibfnamefont
  {M.~G.}\ \bibnamefont {Vergniory}}, \bibinfo {author} {\bibfnamefont
  {C.}~\bibnamefont {Felser}}, \bibinfo {author} {\bibfnamefont {M.~I.}\
  \bibnamefont {Aroyo}},\ and\ \bibinfo {author} {\bibfnamefont {B.~A.}\
  \bibnamefont {Bernevig}},\ }\bibfield  {title} {\bibinfo {title} {{Building
  blocks of topological quantum chemistry: Elementary band representations}},\
  }\href {https://doi.org/10.1103/PhysRevB.97.035139} {\bibfield  {journal}
  {\bibinfo  {journal} {Phys. Rev. B}\ }\textbf {\bibinfo {volume} {97}},\
  \bibinfo {pages} {035139} (\bibinfo {year} {2018})}\BibitemShut {NoStop}%
\bibitem [{\citenamefont {Bradlyn}\ \emph {et~al.}(2017)\citenamefont
  {Bradlyn}, \citenamefont {Elcoro}, \citenamefont {Cano}, \citenamefont
  {Vergniory}, \citenamefont {Wang}, \citenamefont {Felser}, \citenamefont
  {Aroyo},\ and\ \citenamefont {Bernevig}}]{TopQuantChem2}%
  \BibitemOpen
  \bibfield  {author} {\bibinfo {author} {\bibfnamefont {B.}~\bibnamefont
  {Bradlyn}}, \bibinfo {author} {\bibfnamefont {L.}~\bibnamefont {Elcoro}},
  \bibinfo {author} {\bibfnamefont {J.}~\bibnamefont {Cano}}, \bibinfo {author}
  {\bibfnamefont {M.~G.}\ \bibnamefont {Vergniory}}, \bibinfo {author}
  {\bibfnamefont {Z.}~\bibnamefont {Wang}}, \bibinfo {author} {\bibfnamefont
  {C.}~\bibnamefont {Felser}}, \bibinfo {author} {\bibfnamefont {M.~I.}\
  \bibnamefont {Aroyo}},\ and\ \bibinfo {author} {\bibfnamefont {B.~A.}\
  \bibnamefont {Bernevig}},\ }\bibfield  {title} {\bibinfo {title}
  {{Topological quantum chemistry}},\ }\href
  {https://doi.org/10.1038/nature23268} {\bibfield  {journal} {\bibinfo
  {journal} {Nature}\ }\textbf {\bibinfo {volume} {547}},\ \bibinfo {pages}
  {298} (\bibinfo {year} {2017})}\BibitemShut {NoStop}%
\bibitem [{\citenamefont {Brouder}\ \emph {et~al.}(2007)\citenamefont
  {Brouder}, \citenamefont {Panati}, \citenamefont {Calandra}, \citenamefont
  {Mourougane},\ and\ \citenamefont {Marzari}}]{ChernInsulatorsNotLocalisable}%
  \BibitemOpen
  \bibfield  {author} {\bibinfo {author} {\bibfnamefont {C.}~\bibnamefont
  {Brouder}}, \bibinfo {author} {\bibfnamefont {G.}~\bibnamefont {Panati}},
  \bibinfo {author} {\bibfnamefont {M.}~\bibnamefont {Calandra}}, \bibinfo
  {author} {\bibfnamefont {C.}~\bibnamefont {Mourougane}},\ and\ \bibinfo
  {author} {\bibfnamefont {N.}~\bibnamefont {Marzari}},\ }\bibfield  {title}
  {\bibinfo {title} {{Exponential Localization of {W}annier Functions in
  Insulators}},\ }\href {https://doi.org/10.1103/PhysRevLett.98.046402}
  {\bibfield  {journal} {\bibinfo  {journal} {Phys. Rev. Lett.}\ }\textbf
  {\bibinfo {volume} {98}},\ \bibinfo {pages} {046402} (\bibinfo {year}
  {2007})}\BibitemShut {NoStop}%
\bibitem [{\citenamefont {Vergniory}\ \emph {et~al.}(2019)\citenamefont
  {Vergniory}, \citenamefont {Elcoro}, \citenamefont {Felser}, \citenamefont
  {Regnault}, \citenamefont {Bernevig},\ and\ \citenamefont
  {Wang}}]{TopQuantChem3}%
  \BibitemOpen
  \bibfield  {author} {\bibinfo {author} {\bibfnamefont {M.~G.}\ \bibnamefont
  {Vergniory}}, \bibinfo {author} {\bibfnamefont {L.}~\bibnamefont {Elcoro}},
  \bibinfo {author} {\bibfnamefont {C.}~\bibnamefont {Felser}}, \bibinfo
  {author} {\bibfnamefont {N.}~\bibnamefont {Regnault}}, \bibinfo {author}
  {\bibfnamefont {B.~A.}\ \bibnamefont {Bernevig}},\ and\ \bibinfo {author}
  {\bibfnamefont {Z.}~\bibnamefont {Wang}},\ }\bibfield  {title} {\bibinfo
  {title} {{A complete catalogue of high-quality topological materials}},\
  }\href {https://doi.org/10.1038/s41586-019-0954-4} {\bibfield  {journal}
  {\bibinfo  {journal} {Nature}\ }\textbf {\bibinfo {volume} {566}},\ \bibinfo
  {pages} {480} (\bibinfo {year} {2019})}\BibitemShut {NoStop}%
\bibitem [{\citenamefont {Prodan}(2011)}]{NonComBZmethods1}%
  \BibitemOpen
  \bibfield  {author} {\bibinfo {author} {\bibfnamefont {E.}~\bibnamefont
  {Prodan}},\ }\bibfield  {title} {\bibinfo {title} {{Disordered topological
  insulators: a non-commutative geometry perspective}},\ }\href
  {https://doi.org/10.1088/1751-8113/44/11/113001} {\bibfield  {journal}
  {\bibinfo  {journal} {Journal of Physics A: Mathematical and Theoretical}\
  }\textbf {\bibinfo {volume} {44}},\ \bibinfo {pages} {113001} (\bibinfo
  {year} {2011})}\BibitemShut {NoStop}%
\bibitem [{\citenamefont {Prodan}(2017)}]{NonComBZmethods2}%
  \BibitemOpen
  \bibfield  {author} {\bibinfo {author} {\bibfnamefont {E.}~\bibnamefont
  {Prodan}},\ }\bibinfo {title} {{Non-commutative Brillouin Torus}},\ in\ \href
  {https://doi.org/10.1007/978-3-319-55023-7_3} {\emph {\bibinfo {booktitle}
  {{A Computational Non-commutative Geometry Program for Disordered Topological
  Insulators}}}}\ (\bibinfo  {publisher} {Springer International Publishing},\
  \bibinfo {address} {Cham},\ \bibinfo {year} {2017})\ pp.\ \bibinfo {pages}
  {25--48}\BibitemShut {NoStop}%
\bibitem [{\citenamefont {Mondragon-Shem}\ \emph {et~al.}(2014)\citenamefont
  {Mondragon-Shem}, \citenamefont {Hughes}, \citenamefont {Song},\ and\
  \citenamefont {Prodan}}]{2014MondragenShen}%
  \BibitemOpen
  \bibfield  {author} {\bibinfo {author} {\bibfnamefont {I.}~\bibnamefont
  {Mondragon-Shem}}, \bibinfo {author} {\bibfnamefont {T.~L.}\ \bibnamefont
  {Hughes}}, \bibinfo {author} {\bibfnamefont {J.}~\bibnamefont {Song}},\ and\
  \bibinfo {author} {\bibfnamefont {E.}~\bibnamefont {Prodan}},\ }\bibfield
  {title} {\bibinfo {title} {{Topological Criticality in the Chiral-Symmetric
  {AIII} Class at Strong Disorder}},\ }\href
  {https://doi.org/10.1103/PhysRevLett.113.046802} {\bibfield  {journal}
  {\bibinfo  {journal} {Phys. Rev. Lett.}\ }\textbf {\bibinfo {volume} {113}},\
  \bibinfo {pages} {046802} (\bibinfo {year} {2014})}\BibitemShut {NoStop}%
\bibitem [{\citenamefont {Loring}(2015)}]{LocalPseudoSpectra_GeneralPaper}%
  \BibitemOpen
  \bibfield  {author} {\bibinfo {author} {\bibfnamefont {T.~A.}\ \bibnamefont
  {Loring}},\ }\bibfield  {title} {\bibinfo {title} {{K-theory and
  pseudospectra for topological insulators}},\ }\href
  {https://doi.org/https://doi.org/10.1016/j.aop.2015.02.031} {\bibfield
  {journal} {\bibinfo  {journal} {Annals of Physics}\ }\textbf {\bibinfo
  {volume} {356}},\ \bibinfo {pages} {383} (\bibinfo {year}
  {2015})}\BibitemShut {NoStop}%
\bibitem [{\citenamefont {Cerjan}\ and\ \citenamefont
  {Loring}(2022)}]{LocalInvariantsGaplessSystems}%
  \BibitemOpen
  \bibfield  {author} {\bibinfo {author} {\bibfnamefont {A.}~\bibnamefont
  {Cerjan}}\ and\ \bibinfo {author} {\bibfnamefont {T.~A.}\ \bibnamefont
  {Loring}},\ }\bibfield  {title} {\bibinfo {title} {{Local invariants identify
  topology in metals and gapless systems}},\ }\href
  {https://doi.org/10.1103/PhysRevB.106.064109} {\bibfield  {journal} {\bibinfo
   {journal} {Phys. Rev. B}\ }\textbf {\bibinfo {volume} {106}},\ \bibinfo
  {pages} {064109} (\bibinfo {year} {2022})}\BibitemShut {NoStop}%
\bibitem [{\citenamefont {Bianco}\ and\ \citenamefont
  {Resta}(2011)}]{LocalTop1}%
  \BibitemOpen
  \bibfield  {author} {\bibinfo {author} {\bibfnamefont {R.}~\bibnamefont
  {Bianco}}\ and\ \bibinfo {author} {\bibfnamefont {R.}~\bibnamefont {Resta}},\
  }\bibfield  {title} {\bibinfo {title} {{Mapping topological order in
  coordinate space}},\ }\href {https://doi.org/10.1103/PhysRevB.84.241106}
  {\bibfield  {journal} {\bibinfo  {journal} {Phys. Rev. B}\ }\textbf {\bibinfo
  {volume} {84}},\ \bibinfo {pages} {241106} (\bibinfo {year}
  {2011})}\BibitemShut {NoStop}%
\bibitem [{\citenamefont {Gebert}\ \emph {et~al.}(2020)\citenamefont {Gebert},
  \citenamefont {Irsigler},\ and\ \citenamefont
  {Hofstetter}}]{LocalChernMarker}%
  \BibitemOpen
  \bibfield  {author} {\bibinfo {author} {\bibfnamefont {U.}~\bibnamefont
  {Gebert}}, \bibinfo {author} {\bibfnamefont {B.}~\bibnamefont {Irsigler}},\
  and\ \bibinfo {author} {\bibfnamefont {W.}~\bibnamefont {Hofstetter}},\
  }\bibfield  {title} {\bibinfo {title} {{Local Chern marker of smoothly
  confined {H}ofstadter fermions}},\ }\href
  {https://doi.org/10.1103/PhysRevA.101.063606} {\bibfield  {journal} {\bibinfo
   {journal} {Phys. Rev. A}\ }\textbf {\bibinfo {volume} {101}},\ \bibinfo
  {pages} {063606} (\bibinfo {year} {2020})}\BibitemShut {NoStop}%
\bibitem [{\citenamefont {Cook}\ and\ \citenamefont
  {Nielsen}(2023)}]{FiniteInversionTopology1}%
  \BibitemOpen
  \bibfield  {author} {\bibinfo {author} {\bibfnamefont {A.~M.}\ \bibnamefont
  {Cook}}\ and\ \bibinfo {author} {\bibfnamefont {A.~E.~B.}\ \bibnamefont
  {Nielsen}},\ }\bibfield  {title} {\bibinfo {title} {{Finite-size topology}},\
  }\href {https://doi.org/10.1103/PhysRevB.108.045144} {\bibfield  {journal}
  {\bibinfo  {journal} {Phys. Rev. B}\ }\textbf {\bibinfo {volume} {108}},\
  \bibinfo {pages} {045144} (\bibinfo {year} {2023})}\BibitemShut {NoStop}%
\bibitem [{\citenamefont {Flores-Calderon}\ \emph {et~al.}(2023)\citenamefont
  {Flores-Calderon}, \citenamefont {Moessner},\ and\ \citenamefont
  {Cook}}]{TRSFiniteTopology1}%
  \BibitemOpen
  \bibfield  {author} {\bibinfo {author} {\bibfnamefont {R.}~\bibnamefont
  {Flores-Calderon}}, \bibinfo {author} {\bibfnamefont {R.}~\bibnamefont
  {Moessner}},\ and\ \bibinfo {author} {\bibfnamefont {A.~M.}\ \bibnamefont
  {Cook}},\ }\bibfield  {title} {\bibinfo {title} {{Time-reversal invariant
  finite-size topology}},\ }\href {https://doi.org/10.1103/PhysRevB.108.125410}
  {\bibfield  {journal} {\bibinfo  {journal} {Phys. Rev. B}\ }\textbf {\bibinfo
  {volume} {108}},\ \bibinfo {pages} {125410} (\bibinfo {year}
  {2023})}\BibitemShut {NoStop}%
\bibitem [{\citenamefont {Marzari}\ \emph {et~al.}(2012)\citenamefont
  {Marzari}, \citenamefont {Mostofi}, \citenamefont {Yates}, \citenamefont
  {Souza},\ and\ \citenamefont {Vanderbilt}}]{SupercellTreatment1}%
  \BibitemOpen
  \bibfield  {author} {\bibinfo {author} {\bibfnamefont {N.}~\bibnamefont
  {Marzari}}, \bibinfo {author} {\bibfnamefont {A.~A.}\ \bibnamefont
  {Mostofi}}, \bibinfo {author} {\bibfnamefont {J.~R.}\ \bibnamefont {Yates}},
  \bibinfo {author} {\bibfnamefont {I.}~\bibnamefont {Souza}},\ and\ \bibinfo
  {author} {\bibfnamefont {D.}~\bibnamefont {Vanderbilt}},\ }\bibfield  {title}
  {\bibinfo {title} {{Maximally localized {W}annier functions: Theory and
  applications}},\ }\href {https://doi.org/10.1103/RevModPhys.84.1419}
  {\bibfield  {journal} {\bibinfo  {journal} {Rev. Mod. Phys.}\ }\textbf
  {\bibinfo {volume} {84}},\ \bibinfo {pages} {1419} (\bibinfo {year}
  {2012})}\BibitemShut {NoStop}%
\bibitem [{\citenamefont {Marzari}\ and\ \citenamefont
  {Vanderbilt}(1997)}]{SupercellTreatment2}%
  \BibitemOpen
  \bibfield  {author} {\bibinfo {author} {\bibfnamefont {N.}~\bibnamefont
  {Marzari}}\ and\ \bibinfo {author} {\bibfnamefont {D.}~\bibnamefont
  {Vanderbilt}},\ }\bibfield  {title} {\bibinfo {title} {{Maximally localized
  generalized {W}annier functions for composite energy bands}},\ }\href
  {https://doi.org/10.1103/PhysRevB.56.12847} {\bibfield  {journal} {\bibinfo
  {journal} {Phys. Rev. B}\ }\textbf {\bibinfo {volume} {56}},\ \bibinfo
  {pages} {12847} (\bibinfo {year} {1997})}\BibitemShut {NoStop}%
\bibitem [{\citenamefont {Zhang}\ and\ \citenamefont
  {Kamenev}(2023)}]{AnatomyOfTopAndersonTransition_As1DLocalisationStuff}%
  \BibitemOpen
  \bibfield  {author} {\bibinfo {author} {\bibfnamefont {H.}~\bibnamefont
  {Zhang}}\ and\ \bibinfo {author} {\bibfnamefont {A.}~\bibnamefont
  {Kamenev}},\ }\bibfield  {title} {\bibinfo {title} {Anatomy of topological
  anderson transitions},\ }\href {https://doi.org/10.1103/PhysRevB.108.224201}
  {\bibfield  {journal} {\bibinfo  {journal} {Phys. Rev. B}\ }\textbf {\bibinfo
  {volume} {108}},\ \bibinfo {pages} {224201} (\bibinfo {year}
  {2023})}\BibitemShut {NoStop}%
\bibitem [{\citenamefont {Whittaker}\ \emph {et~al.}(2023)\citenamefont
  {Whittaker}, \citenamefont {McCarthy},\ and\ \citenamefont
  {Duan}}]{LinearPaper}%
  \BibitemOpen
  \bibfield  {author} {\bibinfo {author} {\bibfnamefont {D.~M.}\ \bibnamefont
  {Whittaker}}, \bibinfo {author} {\bibfnamefont {M.~M.}\ \bibnamefont
  {McCarthy}},\ and\ \bibinfo {author} {\bibfnamefont {Q.}~\bibnamefont
  {Duan}},\ }\href@noop {} {\bibinfo {title} {{Observation of a Topological
  Phase Transition in Random Coaxial Cable Structures with Chiral Symmetry}}}
  (\bibinfo {year} {2023}),\ \Eprint {https://arxiv.org/abs/2311.11040}
  {arXiv:2311.11040 [cond-mat.dis-nn]} \BibitemShut {NoStop}%
\bibitem [{\citenamefont {Choi}\ and\ \citenamefont
  {Trauzettel}(2023)}]{StackingTopologicalPhaseTransitions}%
  \BibitemOpen
  \bibfield  {author} {\bibinfo {author} {\bibfnamefont {S.-J.}\ \bibnamefont
  {Choi}}\ and\ \bibinfo {author} {\bibfnamefont {B.}~\bibnamefont
  {Trauzettel}},\ }\bibfield  {title} {\bibinfo {title} {{Stacking-induced
  symmetry-protected topological phase transitions}},\ }\href
  {https://doi.org/10.1103/PhysRevB.107.245409} {\bibfield  {journal} {\bibinfo
   {journal} {Phys. Rev. B}\ }\textbf {\bibinfo {volume} {107}},\ \bibinfo
  {pages} {245409} (\bibinfo {year} {2023})}\BibitemShut {NoStop}%
\bibitem [{\citenamefont {Garc\'{\i}a-Fuente}\ \emph
  {et~al.}(2023)\citenamefont {Garc\'{\i}a-Fuente}, \citenamefont {Carrascal},
  \citenamefont {Ross},\ and\ \citenamefont
  {Ferrer}}]{Finite-lengthGraphenenanoribbons}%
  \BibitemOpen
  \bibfield  {author} {\bibinfo {author} {\bibfnamefont {A.}~\bibnamefont
  {Garc\'{\i}a-Fuente}}, \bibinfo {author} {\bibfnamefont {D.}~\bibnamefont
  {Carrascal}}, \bibinfo {author} {\bibfnamefont {G.}~\bibnamefont {Ross}},\
  and\ \bibinfo {author} {\bibfnamefont {J.}~\bibnamefont {Ferrer}},\
  }\bibfield  {title} {\bibinfo {title} {{Full analytical solution of
  finite-length armchair/zigzag nanoribbons}},\ }\href
  {https://doi.org/10.1103/PhysRevB.107.115403} {\bibfield  {journal} {\bibinfo
   {journal} {Phys. Rev. B}\ }\textbf {\bibinfo {volume} {107}},\ \bibinfo
  {pages} {115403} (\bibinfo {year} {2023})}\BibitemShut {NoStop}%
\bibitem [{\citenamefont {Malakar}\ and\ \citenamefont
  {Ghosh}(2023)}]{SSHmodelsEngineeringTopPhases}%
  \BibitemOpen
  \bibfield  {author} {\bibinfo {author} {\bibfnamefont {R.~K.}\ \bibnamefont
  {Malakar}}\ and\ \bibinfo {author} {\bibfnamefont {A.~K.}\ \bibnamefont
  {Ghosh}},\ }\bibfield  {title} {\bibinfo {title} {{Engineering topological
  phases of any winding and Chern numbers in extended Su–Schrieffer–Heeger
  models}},\ }\href {https://doi.org/10.1088/1361-648X/acd15d} {\bibfield
  {journal} {\bibinfo  {journal} {J. Phys.: Condens. Matter}\ }\textbf
  {\bibinfo {volume} {35}},\ \bibinfo {pages} {335401} (\bibinfo {year}
  {2023})}\BibitemShut {NoStop}%
\bibitem [{\citenamefont {Whittaker}\ and\ \citenamefont
  {Ellis}(2021)}]{DavidsArxivPaper}%
  \BibitemOpen
  \bibfield  {author} {\bibinfo {author} {\bibfnamefont {D.~M.}\ \bibnamefont
  {Whittaker}}\ and\ \bibinfo {author} {\bibfnamefont {R.}~\bibnamefont
  {Ellis}},\ }\href@noop {} {\bibinfo {title} {{Topological Protection in
  Disordered Photonic Multilayers and Transmission Lines}}} (\bibinfo {year}
  {2021}),\ \Eprint {https://arxiv.org/abs/2102.03641} {arXiv:2102.03641
  [physics.optics]} \BibitemShut {NoStop}%
\bibitem [{\citenamefont {Jiang}\ \emph {et~al.}(2019)\citenamefont {Jiang},
  \citenamefont {Xiao}, \citenamefont {Chen}, \citenamefont {Yang},
  \citenamefont {Fang}, \citenamefont {Tam},\ and\ \citenamefont
  {Chan}}]{OtherExperimentTransmissionLineTopology}%
  \BibitemOpen
  \bibfield  {author} {\bibinfo {author} {\bibfnamefont {T.}~\bibnamefont
  {Jiang}}, \bibinfo {author} {\bibfnamefont {M.}~\bibnamefont {Xiao}},
  \bibinfo {author} {\bibfnamefont {W.}~\bibnamefont {Chen}}, \bibinfo {author}
  {\bibfnamefont {L.}~\bibnamefont {Yang}}, \bibinfo {author} {\bibfnamefont
  {Y.}~\bibnamefont {Fang}}, \bibinfo {author} {\bibfnamefont {W.~Y.}\
  \bibnamefont {Tam}},\ and\ \bibinfo {author} {\bibfnamefont {C.~T.}\
  \bibnamefont {Chan}},\ }\bibfield  {title} {\bibinfo {title} {{Experimental
  demonstration of angular momentum-dependent topological transport using a
  transmission line network}},\ }\bibfield  {journal} {\bibinfo  {journal} {Nat
  Commun}\ }\textbf {\bibinfo {volume} {10}},\ \href
  {https://doi.org/10.1038/s41467-018-08281-9} {10.1038/s41467-018-08281-9}
  (\bibinfo {year} {2019})\BibitemShut {NoStop}%
\bibitem [{\citenamefont {Oliver}\ \emph {et~al.}(2023)\citenamefont {Oliver},
  \citenamefont {Nabari}, \citenamefont {Price}, \citenamefont {Ricci},\ and\
  \citenamefont {Carusotto}}]{CoaxCablesPhotonicLattices}%
  \BibitemOpen
  \bibfield  {author} {\bibinfo {author} {\bibfnamefont {C.}~\bibnamefont
  {Oliver}}, \bibinfo {author} {\bibfnamefont {D.}~\bibnamefont {Nabari}},
  \bibinfo {author} {\bibfnamefont {H.~M.}\ \bibnamefont {Price}}, \bibinfo
  {author} {\bibfnamefont {L.}~\bibnamefont {Ricci}},\ and\ \bibinfo {author}
  {\bibfnamefont {I.}~\bibnamefont {Carusotto}},\ }\href@noop {} {\bibinfo
  {title} {Photonic lattices of coaxial cables: flat bands and artificial
  magnetic fields}} (\bibinfo {year} {2023}),\ \Eprint
  {https://arxiv.org/abs/2310.18325} {arXiv:2310.18325 [physics.optics]}
  \BibitemShut {NoStop}%
\bibitem [{\citenamefont {Harary}(1962)}]{HarrarySachs1}%
  \BibitemOpen
  \bibfield  {author} {\bibinfo {author} {\bibfnamefont {F.}~\bibnamefont
  {Harary}},\ }\bibfield  {title} {\bibinfo {title} {{The Determinant of the
  Adjacency Matrix of a Graph}},\ }\href {http://www.jstor.org/stable/2027712}
  {\bibfield  {journal} {\bibinfo  {journal} {SIAM Review}\ }\textbf {\bibinfo
  {volume} {4}},\ \bibinfo {pages} {202} (\bibinfo {year} {1962})}\BibitemShut
  {NoStop}%
\bibitem [{\citenamefont {Sachs}(1964)}]{HarrarySachs2}%
  \BibitemOpen
  \bibfield  {author} {\bibinfo {author} {\bibfnamefont {H.}~\bibnamefont
  {Sachs}},\ }\bibfield  {title} {\bibinfo {title} {{Beziehungen zwischen den
  in einem Graphen enthaltenen Kreisen und seinem charakteristischen
  Polynom.}},\ }\href@noop {} {\bibfield  {journal} {\bibinfo  {journal} {Publ.
  Math. Debrecen}\ }\textbf {\bibinfo {volume} {11}},\ \bibinfo {pages} {119}
  (\bibinfo {year} {1964})}\BibitemShut {NoStop}%
\bibitem [{\citenamefont {Duff}\ and\ \citenamefont
  {Uçar}(2010)}]{BlockTriangularFormSymmetricMatrices_ForAlgorithmDiscussion}%
  \BibitemOpen
  \bibfield  {author} {\bibinfo {author} {\bibfnamefont {I.~S.}\ \bibnamefont
  {Duff}}\ and\ \bibinfo {author} {\bibfnamefont {B.}~\bibnamefont {Uçar}},\
  }\bibfield  {title} {\bibinfo {title} {{On the Block Triangular Form of
  Symmetric Matrices}},\ }\href {http://www.jstor.org/stable/20780167}
  {\bibfield  {journal} {\bibinfo  {journal} {SIAM Review}\ }\textbf {\bibinfo
  {volume} {52, no. 3}},\ \bibinfo {pages} {455–470} (\bibinfo {year}
  {2010})}\BibitemShut {NoStop}%
\end{thebibliography}%


\widetext
\pagebreak
\begin{center}
\textbf{\large Supplementary Materials: {A Topological Classification of Finite Chiral Structures using Complete Matchings}}
\end{center}
\setcounter{equation}{0}
\setcounter{figure}{0}
\setcounter{table}{0}
\setcounter{page}{1}
\setcounter{section}{0}
\renewcommand\thesection{\Alph{section}}
\renewcommand\thesubsection{\thesection.\arabic{subsection}}
\makeatletter
\renewcommand{\theequation}{S\arabic{equation}}
\renewcommand{\thefigure}{S\arabic{figure}}
\renewcommand{\bibnumfmt}[1]{[S#1]}
\renewcommand{\citenumfont}[1]{S#1}

\section{\label{FactThm}Factorisation theorem}

\noindent In this section we prove a theorem relating the triangular block form of $Q$ to the factorisation of $|Q|$. Consider a real or complex matrix $Q$, where $Q$ is a chiral block of a Hamiltonian
\begin{equation}
H = \begin{pmatrix}
0 & Q \\
Q^{\dagger} & 0
\end{pmatrix}.
\end{equation}
Below we show that, for algebraically independent hopping terms, then $|Q|$ is reducible if and only if there exists a way of ordering the sites such that $Q$ is upper block triangular. We refer to the resultant form of $Q$ as the maximal triangular block basis, or triangular block basis of $Q$, but it should be emphasised that this basis specifically corresponds to a permutation of $H$. We will use this in section \ref{sec:algorithm} to define an algorithm that classifies a particular structure. \\
\indent Note that often in the following arguments we will refer to a property that occurs \textit{almost always}. By this we mean that if hopping terms were selected randomly from a continuous distribution (possibly one that satisfies a certain constraint) then this property occurs with probability 1.

\indent We proceed by showing that the factorisation is well defined, before proving that a square weighted matrix has a factored determinant (for all matrix entries) if and only if it is block triangular. In proving this relationship we will show that when a section is singular, for almost all hopping terms a section has non-zero support of an eigenstate on all sites. \\ 
\indent In order to interpret determinants of a matrix as a polynomial, we consider a polynomial ring that contains them. Formally this ring is larger then the set of polynomials corresponding to determinants, but this is unimportant for our discussion. In particular we are interested in determinants of matrices, so we consider an $N\times N$ matrix,
\begin{equation}
Q_{i,j} = X_{i,j}
\end{equation}
where $X_{i,j}$ is an indeterminate over some field, or else fixed at 0, and all non-zero $X_{i,j}$ are algebraically independent. Generically we take this to be $\mathbb{R}$ or $\mathbb{C}$, but more arbitrary fields are perfectly reasonable to consider. We then take the polynomial ring $P[X_{i,j}]$ over the indeterminates $X_{i,j}$. The determinants $|Q|$ are polynomials in this ring which are linear for each hopping term in $|Q|$. To consider the matrices that may only represent tight binding models undergoing continuous evolution, then we need to restrict the domain for each indeterminate to be $\mathbb{R}^{\pm}$ or $\mathbb{C}\setminus \{0\}$ or else fixed at 0. Non-zero indeterminates are then in a semiring. \\
\indent Formally when we classify a structure we are interested in the subspace $E_0$ of $\xi$ which has no (exactly) zero energy states. We then wish to calculate the number of ways we can map a Hamiltonian to this subspace, which corresponds to calculating the zeroth homotopy group of $E_0$ under the usual topology. The zeroth homotopy group of a space counts the number of disconnected components of that space. The irreducible factorisation of $|Q|$ tells us the number of path connected components in $E_0$.
\begin{definition}
Let $X$ be the subspace of $\xi$ corresponding to a solution to $|Q|=0$. Then $E_0:=\xi\setminus X$. 
\end{definition}

\indent In order for our classification to be well defined, we need there to exist a factorisation of $|Q|$ that is irreducible and unique, that is we need the polynomial ring to be prime. This follows directly from the fact that the indeterminates are taken over a field. In other words $P$ is a unique factorisation domain. For a less abstract argument, consider the following:

\begin{prop}
If for some hopping terms $|Q|\neq 0$ then $|Q|$ has a unique irreducible factorisation in $P$.
\end{prop}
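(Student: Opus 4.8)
The plan is to reduce the statement to the standard fact that a polynomial ring in finitely many indeterminates over a field is a unique factorisation domain (UFD), after first making precise that $|Q|$ really is a non-zero element of $P$ to which such a factorisation applies. The hypothesis that $|Q| \neq 0$ for some choice of hopping terms means there is an assignment of field values to the indeterminates $X_{i,j}$ at which the determinant does not vanish; since over an infinite field a polynomial that vanishes under every assignment is the zero polynomial, this is equivalent to $|Q|$ being a non-zero element of $P$. I would further note that the non-vanishing assignment may be taken with all indeterminates non-zero: the zero set of a non-zero polynomial together with the coordinate hyperplanes $\{X_{i,j}=0\}$ is a proper closed subset, so its complement is dense and non-empty. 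This matches the physical restriction to non-zero hopping terms and confirms the factorisation question is non-vacuous.

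Existence of an irreducible factorisation then follows by descent on degree. The determinant $|Q|$ is a polynomial of total degree $N$, each term being a product of $N$ matrix entries. If $|Q|$ is a unit or irreducible we are done; otherwise write $|Q| = fg$ with $f,g$ non-units, each of strictly smaller total degree, and iterate. As the total degree is a non-negative integer, the process terminates and yields $|Q| = u\prod_i p_i$ with $u$ a non-zero constant and each $p_i$ irreducible.

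Uniqueness is the substantive step. Writing $P = k[X_1,\dots,X_m]$ for the ring in the $m$ non-zero indeterminates over the field $k$, I would invoke the UFD property: by induction on $m$ using Gauss's lemma (if $R$ is a UFD then so is $R[x]$), with base case that the field $k$ is trivially a UFD, $P$ is a UFD and in particular every irreducible element is prime. Given two factorisations $u\prod_i p_i = v\prod_j q_j$ into irreducibles, primality of $p_1$ forces $p_1 \mid q_j$ for some $j$, and irreducibility of $q_j$ then makes $p_1$ and $q_j$ associates; cancelling this pair and inducting on the number of factors shows the two factorisations coincide up to units and reordering.

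The main obstacle is precisely this uniqueness: the degree-descent argument gives existence cheaply, but uniqueness rests on the implication that an irreducible element of $P$ is prime, which is the genuine content of $P$ being a UFD and relies on Gauss's lemma rather than on any elementary manipulation of the determinant itself. By contrast, the translation of the hypothesis and the existence half are routine.
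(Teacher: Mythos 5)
Your proposal is correct, and it takes a genuinely different route from the proof the paper actually writes out. The paper states, just before the proposition, that uniqueness ``follows directly from the fact that the indeterminates are taken over a field,'' i.e.\ that $P$ is a UFD --- which is exactly the fact you prove properly via Gauss's lemma and induction on the number of indeterminates --- but its displayed proof is instead a short ``less abstract'' contradiction argument: assume some irreducible $a_j$ is associate to no $b_k$, deduce $\prod_{i\neq j}a_i \neq \prod_{i\neq k}b_i/m$, multiply through, and contradict $\prod a_i = \prod b_i$. That argument is essentially circular: the step where non-associateness of $a_j$ to the $b_k$ is converted into inequality of the complementary products is precisely the assertion that an irreducible dividing a product must be associate to one of its irreducible factors, i.e.\ that irreducibles in $P$ are prime --- the genuine content of the UFD property that you correctly isolate as the crux and supply via Gauss's lemma. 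Your version also adds two things the paper leaves implicit: the translation of the hypothesis ``$|Q|\neq 0$ for some hopping terms'' into ``$|Q|$ is a non-zero element of $P$'' (using that a polynomial over an infinite field vanishing everywhere is zero), and the observation that the witnessing point can be taken with all indeterminates non-zero so that the physical restriction to non-vanishing hoppings costs nothing; the existence half by degree descent is likewise standard and fine. In short, what your approach buys is an actual proof of the uniqueness claim, at the price of invoking standard commutative algebra rather than the paper's (unsuccessful) attempt at an elementary self-contained manipulation.
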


\begin{proof}
Suppose $|Q| = \prod a_i$ and $|Q| = \prod b_i$ such that $a_i, b_i$ are irreducible. Assume that for the irreducible factor $a_j$ there is no irreducible factor $b_k$ such that $a_j = m b_k$ for some non zero constant $m$, then
\begin{equation}
\prod_{i\neq j} a_i \neq \frac{\prod_{i\neq k} b_i}{m}.
\end{equation}
multiplying by $a_j = m b_k$ we see
\begin{equation}
a_j \prod_{i\neq j} a_i \neq b_k \prod_{i\neq k} b_i
\end{equation}
which is a contradiction. Hence for every $j$ there is a $k$ such that $a_j = m b_k$ for some constant $m$, showing a factorisation of $|Q|$ is unique and irreducible.
\end{proof}

\noindent Given that the irreducible factorisation of $|Q|$ is well defined, we now give a proof that $|Q| = \prod |q_i|$ if and only if there exists a block triangular structure of $Q$ with $q_i$ as diagonal blocks.

\subsection{\label{subsection:level1_factorisationproof}}

\noindent Here we aim to prove that a determinant of $Q$ is factored if and only if there exists a block triangular basis of $Q$. A corollary of one of the propositions is that a critical section almost always has non-zero support of the nullstates on every site. To prove this we first show that two sections, pairwise, may only connect on one sublattice. Then we show that for a structure with $N$ sections there must always be at least one section that may only connect to any other section on one sublattice. As this is true for any number of sections $N$ this implies that if $|Q|$ is factored, then there exists a triangular block structure of $Q$. \\
\indent More specifically, given a section $g_i$, if every first minor of $q_i$ is almost always non-zero then deleting a black and a white site from $g_i$ will result in a structure with a complete matching, because the determinant of the remaining structure is almost always non-zero. Therefore if we have two sections, $g_i$ and $g_j$ and they connect to one another on both sub lattices, then we can delete a black and a white site from both sections (each) that are connected to one another. The remaining structure has a complete matching because the associated minor is factored by a first minor of $q_i$ and a first minor of $q_j$. Consequently $g_i$ and $g_j$ can connect to one another on only one sublattice. \\
\indent Our proof relies on knowing when hopping terms are in a complete matching of some graph or not. For this we need to define a type of matching.
\begin{definition}
A \textbf{valid matching} is a matching between two sites on some graph $g$ such that the matching is part of a complete matching of $g$.
\end{definition}
\indent We then show that there cannot exist a particular type of cycle of sections (see Definition \ref{subsubsection:sectioncycle}). This ensures at least one section can only connect to other sections from one sublattice. This is true for an arbitrary number of sections, giving the triangular structure to $Q$. \\
\indent For consistency if two sections $g_i$ and $g_j$ connect from a black site in $g_i$ to a white site in $g_j$ we say $g_i$ connects to $g_j$ on the black sub lattice.

\begin{prop} \label{subsubsection:weightednuts}
Given a section $g_i$ with tight binding model $h_i = \begin{bmatrix}
0 & q_i \\
q_i^{\dagger} & 0
\end{bmatrix}$ such that $|q_i|$ is irreducible, then every first minor of $q_i$ is almost always nonzero.
\end{prop}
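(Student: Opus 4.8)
The plan is to convert this analytic statement into a purely combinatorial one and then argue the contrapositive. Write $n$ for the number of black (equivalently white) sites of the section, so that $q_i$ is $n\times n$. Each first minor of $q_i$---the determinant obtained by deleting the row of some black site $b$ and the column of some white site $w$---is itself a polynomial in the algebraically independent hopping terms, so it either vanishes identically or is nonzero off a set of measure zero; hence ``almost always nonzero'' is equivalent to ``not the zero polynomial''. By the bipartite Harary--Sachs correspondence \cite{HarrarySachs1,HarrarySachs2}, this minor is, up to sign, the determinant of the sub-bipartite graph $g_i-b-w$, and since distinct complete matchings contribute distinct monomials that cannot cancel, the minor is a nonzero polynomial exactly when $g_i-b-w$ admits a complete matching. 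The proposition is therefore equivalent to: if $|q_i|$ is irreducible, then $g_i-b-w$ has a complete matching for every black $b$ and white $w$.

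I would then prove the contrapositive. Suppose some $g_i-b^{\ast}-w^{\ast}$ has no complete matching; I aim to produce a nontrivial factorisation of $|q_i|$. Because the section has no protected states, $|q_i|\not\equiv 0$, so $g_i$ itself has a complete matching and hence satisfies Hall's condition $|N(S)|\geq|S|$ for every set $S$ of white sites, where $N(S)$ denotes the set of black neighbours. Applying Hall's theorem to the balanced reduced graph $g_i-b^{\ast}-w^{\ast}$, the absence of a complete matching gives a nonempty $S\subseteq W\setminus\{w^{\ast}\}$ whose neighbourhood inside $g_i-b^{\ast}$ is strictly smaller than $S$. Deleting the single vertex $b^{\ast}$ shrinks the neighbourhood by at most one, so in the \emph{full} graph $|N(S)|\leq|S|$; combined with Hall's inequality for $g_i$ this forces the tight equality $|N(S)|=|S|$, with $1\leq|S|\leq n-1$.

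The tight set then exposes a zero block. By definition no black site outside $N(S)$ is adjacent to any white site in $S$, so ordering the black sites as $N(S)$ followed by its complement, and the white sites as $S$ followed by its complement, puts $q_i$ into block-triangular form with a zero $(n-|S|)\times|S|$ corner; thus $|q_i|=\pm|P|\,|T|$ for the two square diagonal blocks $P$ and $T$. A complete matching of $g_i$ necessarily sends $S$ bijectively onto $N(S)$ and its complement onto the complement, so its restrictions are complete matchings of the blocks, showing $|P|$ and $|T|$ are each nonzero polynomials of positive degree ($|S|$ and $n-|S|$). The factorisation is therefore genuinely nontrivial and $|q_i|$ is reducible, the desired contradiction. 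The case $n=1$ is immediate, as the only first minor is the empty determinant $1\neq0$.

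The main obstacle, and the step I would treat most carefully, is the refinement in the second paragraph: a first minor can fail to vanish for harmless reasons, so the real content is that a \emph{vanishing} minor cannot be accidental but must arise from a Hall set that is already tight in the uncut graph. It is precisely this tightness---equality rather than strict deficiency---that turns a missing matching into a zero block and hence into a factor of $|q_i|$. Bookkeeping of how the deletions of $b^{\ast}$ and $w^{\ast}$ alter neighbourhood sizes, together with verifying that the two resulting blocks are both proper and have nonvanishing determinant, is where the argument must be done with precision.
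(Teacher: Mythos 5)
Your proof is correct, but it reaches the result by a genuinely different route from the paper's. You reduce the proposition to the combinatorial claim that $g_i-b-w$ has a complete matching for every black $b$ and white $w$ (correctly noting that algebraic independence of the hopping terms prevents cancellation between monomials, so a minor vanishes identically iff no matching exists), and then run the contrapositive through the deficiency form of Hall's theorem: a missing matching in $g_i-b^{\ast}-w^{\ast}$ forces a \emph{tight} set $|N(S)|=|S|$ with $1\le|S|\le n-1$ in the uncut section, which exposes an $(n-|S|)\times|S|$ zero block and hence a nontrivial factorisation $|q_i|=\pm|P|\,|T|$, contradicting irreducibility. The paper instead argues constructively: starting from a complete matching of $g_i$ with one black site deleted, it performs alternating-swap surgery (a ``walk'') to relocate the unmatched white site, and shows that if some white site were unreachable the valid matchings would partition into two independent families and factor $|q_i|$. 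These are two faces of the same matching-covered-graph fact. Your version is shorter and ties the obstruction directly to the zero block that underpins the paper's block-triangularisation theorem in section A.1 (indeed your tight-set argument could be reused there), and it cleanly handles the non-unit check by observing that both diagonal blocks have positive degree and inherit a matching from the restriction of a complete matching of $g_i$; the paper's walk is more explicit about how the complete matchings of $g_i-b-w$ are generated from those of $g_i$, which connects more directly to the enumeration-of-matchings viewpoint and the classification algorithm of section B. I see no gap in your argument; the only point to keep explicit in a final write-up is that deleting $w^{\ast}$ does not change neighbourhoods of white sets (bipartiteness), so the deficiency really is attributable to the single deleted black site $b^{\ast}$, which is exactly the bookkeeping you flag.
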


\begin{proof}
Given that $h_i$ is chiral, a non-zero term in $|q_i|$ corresponds to a complete matching of $g_i$. Any first order minor of $q_i$ can be accessed by deleting a white and a black site from $g_i$, with the first order minors corresponding to determinants of the $(N-1)\times (N-1)$ sub matrices of $q_i$. So if there exists a complete matching of every structure where we delete a black and a white site from $g_i$ then for almost all hopping terms the minor is non-zero. \\
\indent Suppose that we have a section with the set of complete matchings $\mathscr{C} = \{C_i\}$ and delete a black site $b_1$. Match all the remaining sites possible from the complete matching $C_i$, leaving a single unmatched white site $w_1$. If an individual site is in only one valid matching, as every complete matching must include a matching on every site, this would factorise the determinant. So every site is in at least two distinct valid matchings contained in the complete matchings $\mathscr{C}$. So $w_1$ may be matched to a black site $b_2$ with a matching in $C_j$. Removing the matching containing $b_2$ which is in the complete matching $C_i$ leaves a second white site $w_2$ unmatched. This process has changed which site is the unmatched white site, see Fig. \ref{WalkOnSectionFig} for an example. Iterating this step defines a walk over the structure. If we delete the unmatched white site, we automatically get a complete matching of the remaining structure, showing the associated minor is almost always non-zero. \\
\indent We now demonstrate any white site in a section may be unmatched by such a walk. Assume there are a set $\mathscr{W}_m$ of white sites that cannot be unmatched by iterating this walk, and a set $\bar{\mathscr{W}}_u$ of white sites that can be unmatched by iterating this walk. In $C_i$ all $\mathscr{W}_m$ white sites are matched to a black site. 
If there exists a valid matching from a black site $b_1$ to a white site in $\mathscr{W}_m$ and a valid matching from $b_1$ to a white site in $\bar{\mathscr{W}}_u$, then the site in $\mathscr{W}_m$ is unmatchable. So the black neighbours of $\mathscr{W}_m$ sites do not have a valid matching to any white site in $\bar{\mathscr{W}}_u$. 
If a white site has a valid matching to $b_1$ it is possible to unmatch, so no white site $w_m \in \mathscr{W}_m$ has a valid matching to $b_1$. This partitions the sites in to two sets: one with white sites that can be unmatched $U$, and one with white sites that cannot be unmatched $M$, with valid matchings. The valid matchings over $M$ are therefore independent of the valid matchings over $U$, giving a factorisation of $|q_i|$. By contradiction all white sites in a section are possible to unmatch via such a walk. As the choice of $b_1$ was arbitrary, this proves the existence of a complete matching of a section when deleting one white and one black site, that is, every first minor of $|q_i|$ is non-zero.
\end{proof}

\begin{figure}
\centering
\begin{tikzpicture}
\node (a) at (1,0) {\footnotesize{$\circ$}};
\node (b) at (1,-3) {\footnotesize{$\bullet$}};
\node (c) at (4,-3) {\footnotesize{$\circ$}};
\node (d) at (4,0) {\footnotesize{$\bullet$}};
\node (e) at (7,0) {\footnotesize{$\circ$}};
\node (f) at (7,-3) {\footnotesize{$\bullet$}};

\draw (a)--(b);
\draw (c)--(d);
\draw (e)--(f);

\draw (a)--(d);
\draw (d)--(e);

\draw (b)--(c);
\draw (c)--(f);

\node(g) at (4,-4) {(a)};

\draw (5.5,0) circle [x radius=1.75, y radius=0.5, rotate=0];
\draw (5.5,-3) circle [x radius=1.75, y radius=0.5, rotate=0];
\draw (1,-1.5) circle [x radius=1.75, y radius=0.5, rotate=90];

\draw[dash dot] (2.5,0) circle [x radius=1.75, y radius=0.5, rotate=0];
\draw[dash dot] (2.5,-3) circle [x radius=1.75, y radius=0.5, rotate=0];
\draw[dash dot] (7,-1.5) circle [x radius=1.75, y radius=0.5, rotate=90];

\draw[dashed] (7,-1.5) circle [x radius=1.95, y radius=0.7, rotate=90];
\draw[dashed] (4,-1.5) circle [x radius=1.95, y radius=0.7, rotate=90];
\draw[dashed] (1,-1.5) circle [x radius=1.95, y radius=0.7, rotate=90];

\node (a) at (0,-5) {\footnotesize{$\circ$}};
\node (c) at (1,-6) {\footnotesize{$\circ$}};
\node (d) at (1,-5) {\footnotesize{$\bullet$}};
\node (e) at (2,-5) {\footnotesize{$\circ$}};
\node (f) at (2,-6) {\footnotesize{$\bullet$}};

\draw (c)--(d);
\draw (e)--(f);

\draw (a)--(d);
\draw (d)--(e);

\draw (c)--(f);

\draw (1.5,-5) circle [x radius=0.7, y radius=0.15, rotate=0];
\draw (1.5,-6) circle [x radius=0.7, y radius=0.15, rotate=0];

\node (c) at (1,-8) {\footnotesize{$\circ$}};
\node (d) at (1,-7) {\footnotesize{$\bullet$}};
\node (e) at (2,-7) {\footnotesize{$\circ$}};
\node (f) at (2,-8) {\footnotesize{$\bullet$}};

\draw (c)--(d);
\draw (e)--(f);

\draw (d)--(e);

\draw (c)--(f);

\draw (1.5,-7) circle [x radius=0.7, y radius=0.15, rotate=0];
\draw (1.5,-8) circle [x radius=0.7, y radius=0.15, rotate=0];

\node (a) at (3,-5) {\footnotesize{$\circ$}};
\node (c) at (4,-6) {\footnotesize{$\circ$}};
\node (d) at (4,-5) {\footnotesize{$\bullet$}};
\node (e) at (5,-5) {\footnotesize{$\circ$}};
\node (f) at (5,-6) {\footnotesize{$\bullet$}};

\draw (c)--(d);
\draw (e)--(f);

\draw (a)--(d);
\draw (d)--(e);

\draw (c)--(f);

\draw[dash dot] (3.5,-5) circle [x radius=0.7, y radius=0.15, rotate=0];
\draw (4.5,-6) circle [x radius=0.7, y radius=0.15, rotate=0];

\node (a) at (3,-7) {\footnotesize{$\circ$}};
\node (c) at (4,-8) {\footnotesize{$\circ$}};
\node (d) at (4,-7) {\footnotesize{$\bullet$}};
\node (f) at (5,-8) {\footnotesize{$\bullet$}};

\draw (c)--(d);

\draw (a)--(d);

\draw (c)--(f);

\draw[dash dot] (3.5,-7) circle [x radius=0.7, y radius=0.15, rotate=0];
\draw (4.5,-8) circle [x radius=0.7, y radius=0.15, rotate=0];

\node (a) at (6,-5) {\footnotesize{$\circ$}};
\node (c) at (7,-6) {\footnotesize{$\circ$}};
\node (d) at (7,-5) {\footnotesize{$\bullet$}};
\node (e) at (8,-5) {\footnotesize{$\circ$}};
\node (f) at (8,-6) {\footnotesize{$\bullet$}};

\draw (c)--(d);
\draw (e)--(f);

\draw (a)--(d);
\draw (d)--(e);

\draw (c)--(f);

\draw[dash dot] (6.5,-5) circle [x radius=0.7, y radius=0.15, rotate=0];
\draw[dashed] (8,-5.5) circle [x radius=0.7, y radius=0.15, rotate=90];

\node (a) at (6,-7) {\footnotesize{$\circ$}};
\node (d) at (7,-7) {\footnotesize{$\bullet$}};
\node (e) at (8,-7) {\footnotesize{$\circ$}};
\node (f) at (8,-8) {\footnotesize{$\bullet$}};

\draw (e)--(f);

\draw (a)--(d);
\draw (d)--(e);



\draw[dash dot] (6.5,-7) circle [x radius=0.7, y radius=0.15, rotate=0];
\draw[dashed] (8,-7.5) circle [x radius=0.7, y radius=0.15, rotate=90];

\node(g) at (4,-9) {(b)};

\end{tikzpicture}
\caption{(a) 3 rung ladder graphene, which has the classification $\mathbb{Z}_2$, and all 3 complete matchings of the structure denoted with a solid line, dashed line, and dash dot line. (b) Displays the structure having deleted one of the black vertices, and matching all but one white vertex. By swapping a single matching that neighbours the unmatched vertex at a time, it is possible to leave any of the white sites unmatched. Deleting the unmatched white vertex shows that for the deleted black vertex, deleting any of the white vertices leaves a structure with a complete matching.}
\label{WalkOnSectionFig}
\end{figure}
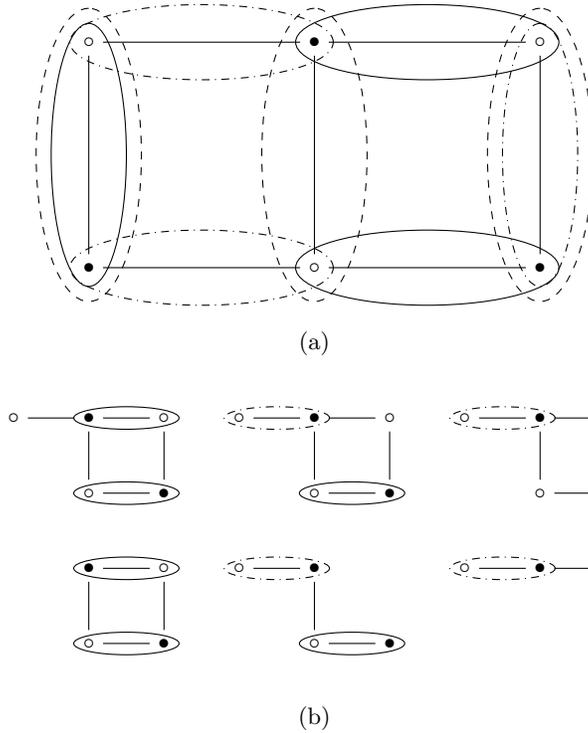

\begin{corollary}
Suppose a section is singular, then $q_i\ket{\psi} = 0$ for some non-zero $\ket{\psi}$, and for almost all hopping terms $\ket{\psi_j}$ indexed over the $j$ sites of $g_i$ is non-zero for every $j$.
\end{corollary}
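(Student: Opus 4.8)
The plan is to read off the components of the null eigenvector directly from the adjugate (classical adjoint) of $q_i$, and then show those components are almost always nonzero using the preceding Proposition \ref{subsubsection:weightednuts}. Let $q_i$ be the $N\times N$ block of the singular section, so $|q_i|=0$, and recall the identity $q_i\,\operatorname{adj}(q_i) = |q_i|\,\mathbb{1} = 0$. Every column of $\operatorname{adj}(q_i)$ therefore lies in the kernel of $q_i$. Fixing a row index $k$, the $j$-th entry of the $k$-th column of the adjugate equals $(-1)^{k+j}$ times the first minor $M_{kj}$ of $q_i$ obtained by deleting row $k$ and column $j$. Hence, provided this column is nonzero, it supplies a null vector $\ket{\psi}$ whose $j$-th component is, up to a sign, exactly the first minor $M_{kj}$. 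The corollary then reduces to showing that all of these minors are simultaneously nonzero for almost all hopping terms on the constraint surface.

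First I would establish that, almost always on the surface $\{|q_i|=0\}$, the corank of $q_i$ is exactly one. This is because corank two or greater forces \emph{all} first minors to vanish, and I will show below that even a single first minor vanishes only on a measure-zero subset of $\{|q_i|=0\}$. When the corank is one, the kernel is one-dimensional, $\operatorname{adj}(q_i)$ has rank one, and the null vector $\ket{\psi}$ constructed above is, up to scale, the unique nullstate. It then suffices to show that its components $M_{kj}$ are all nonzero.

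Next I would control the first minors on the constraint surface. By Proposition \ref{subsubsection:weightednuts}, every first minor of $q_i$ is almost always nonzero on the full parameter space, so no $M_{kj}$ is the zero polynomial. Because $g_i$ is an irreducible section, $|q_i|$ is an irreducible polynomial, and the decisive step is a degree count: $|q_i|$ is homogeneous of degree $N$ in the hopping variables (every surviving monomial is a product of $N$ distinct hopping terms), whereas each first minor $M_{kj}$ has degree $N-1$. An irreducible polynomial of degree $N$ cannot divide a nonzero polynomial of degree $N-1$, so $|q_i|\nmid M_{kj}$; by the Nullstellensatz this means $M_{kj}$ does not vanish identically on $\{|q_i|=0\}$. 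Consequently $\{|q_i|=0\}\cap\{M_{kj}=0\}$ is a proper subvariety of the surface $\{|q_i|=0\}$, hence of measure zero. Taking the union over the $N$ choices of $j$ leaves a measure-zero exceptional set, off which every component of $\ket{\psi}$ is nonzero, completing the argument.

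The main obstacle is precisely this last step: ruling out the simultaneous vanishing of $|q_i|$ and $M_{kj}$. The Proposition only guarantees that the minors are generically nonzero on the \emph{unconstrained} parameter space, whereas the corollary restricts attention to the codimension-one surface $\{|q_i|=0\}$, where one might worry that the singularity condition drags a minor to zero as well. The irreducibility of $|q_i|$ together with the strict inequality $\deg M_{kj} = N-1 < N = \deg |q_i|$ is exactly what forbids this, and is therefore the heart of the proof.
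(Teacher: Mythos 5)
Your proof is correct and rests on the same essential ingredient as the paper's --- Proposition \ref{subsubsection:weightednuts}, i.e.\ that every first minor of $q_i$ is a nonzero polynomial --- but it takes a recognisably different and, in one respect, more careful route. The paper deletes one black and one white site, writes the corresponding rows of $q_i\ket{\psi}=0$ as $\bar{q}_i\ket{\bar{\psi}}=-\alpha\ket{\psi_{\alpha}}$, and concludes that the deleted component cannot vanish because the generically invertible $\bar{q}_i$ would then force $\ket{\psi}=0$; your adjugate construction is the explicit Cramer form of that same computation, exhibiting the components of the nullstate directly as signed first minors $M_{kj}$, and it gives you the corank-one statement essentially for free. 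Where you genuinely add something is the final step: Proposition \ref{subsubsection:weightednuts} only guarantees the minors are nonzero polynomials on the \emph{unconstrained} parameter space, and the paper applies this on the codimension-one surface $\{|q_i|=0\}$ without comment. Your observation that $|q_i|$ is irreducible and homogeneous of degree $N$ while $\deg M_{kj}=N-1$, so that $|q_i|\nmid M_{kj}$ and hence $M_{kj}$ cannot vanish identically on the singular surface, is precisely the justification the paper's proof leaves implicit. The one caveat is that the Nullstellensatz argument is native to algebraically closed fields, whereas the paper restricts hopping terms to real nonzero values; to close this over $\mathbb{R}$ one should add that $|q_i|$ is multilinear with every hopping term appearing in its expansion, so its gradient is nonzero at generic real points of the surface, which is therefore a smooth hypersurface on which a polynomial not divisible by $|q_i|$ can vanish only on a set of zero surface measure. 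That is a refinement rather than a gap, and your argument as a whole is at least as rigorous as the paper's.
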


\begin{proof}
By Prop. \ref{subsubsection:weightednuts} every first minor of $q_i$ is almost always non-zero. Therefore any submatrix $\bar{q}_i$ resulting from deleting a white and black site of $g_i$ is almost always non-singular. Projecting the eigenstate $q_i\ket{\psi} = 0$ on to the submatrix gives
\begin{equation}
\bar{q}_i \ket{\bar{\psi}} \neq 0
\end{equation}
for all $\bar{q}_i$. Therefore given any submatrix of $\begin{bmatrix}\bar{q}_i & \alpha\end{bmatrix}$ of $q_i$ such that
\begin{equation} \begin{split}
q_i &= \begin{bmatrix}
\bar{q}_i & \alpha \\
\hspace{1em} A
\end{bmatrix} \\
\text{and} \,\,\, \ket{\psi} &= \begin{bmatrix}
\ket{\bar{\psi}} \\
\ket{\psi_{\alpha}}
\end{bmatrix} \,\,\, \text{for} \,\,\, q_i\ket{\psi} = 0
\end{split}
\end{equation}
we see
\begin{equation}
\bar{q}_i \ket{\bar{\psi}} = -\alpha \ket{\psi_{\alpha}}.
\end{equation}
As every first minor is non-zero then for almost all hopping term values $\ket{\psi}$ has non-zero support on every site of $g_i$.
\end{proof}

\noindent We now wish to show that, given a pair of sections $g_i,g_j$ then they may only connect on one sublattice.

\begin{prop}
Given two section $g_i,g_j$ then non-zero hopping terms may only be between black sites on $g_i$ to white sites on $g_j$.
\end{prop}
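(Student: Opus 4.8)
The plan is to argue by contradiction, using the correspondence between complete matchings and monomials in the Leibniz expansion of the determinant together with the first-minor result of Prop.~\ref{subsubsection:weightednuts}. Suppose, for contradiction, that $g_i$ and $g_j$ connect on both sublattices. Then there is a non-zero hopping term $e_1$ joining a black site $b^{(i)}$ of $g_i$ to a white site $w^{(j)}$ of $g_j$, and a non-zero hopping term $e_2$ joining a black site $b^{(j)}$ of $g_j$ to a white site $w^{(i)}$ of $g_i$. The aim is to show that these two cross-connections can be absorbed into a single complete matching of the whole structure, which is impossible because cross-section hopping terms appear in no factor of $|Q| = \prod_k |q_k|$.

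First I would use $e_1$ to match $b^{(i)}$ with $w^{(j)}$, and $e_2$ to match $b^{(j)}$ with $w^{(i)}$. This removes exactly one black and one white site from each of $g_i$ and $g_j$ (namely $b^{(i)}$ and $w^{(i)}$ from $g_i$, and $b^{(j)}$ and $w^{(j)}$ from $g_j$), leaving a balanced remainder in each section; note that the deleted sites within a section are on opposite sublattices and are therefore distinct. Since the sections correspond to irreducible factors, $|q_i|$ and $|q_j|$ are irreducible, so Prop.~\ref{subsubsection:weightednuts} applies and the relevant first minors are not identically zero, i.e.\ the remaining sites of $g_i$ and of $g_j$ each admit a complete matching using only within-section edges. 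Combining these with the two matchings $e_1,e_2$ and with any complete matching of the remaining sections yields a complete matching of the entire structure that uses both $e_1$ and $e_2$.

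It remains to convert this matching into a contradiction. Because the hopping terms are algebraically independent indeterminates, distinct permutations in the expansion of $|Q|$ produce distinct monomials, so no cancellation occurs and a hopping term appears in $|Q|$ (as a polynomial) exactly when it lies in some complete matching. The matching just constructed therefore forces a monomial containing both $e_1$ and $e_2$ to survive in $|Q|$. However, $|Q| = \prod_k |q_k|$, and each $|q_k|$ is a polynomial in the within-section hopping terms of $g_k$ alone; by multilinearity the factors involve disjoint sets of hopping terms, so every monomial of $\prod_k |q_k|$ is a product of within-section terms and cannot contain the cross-section terms $e_1$ or $e_2$. This contradiction shows $g_i$ and $g_j$ cannot connect on both sublattices, so, after relabelling if necessary, every connecting hopping term runs from a black site of $g_i$ to a white site of $g_j$.

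I expect the main obstacle to be the bookkeeping in the middle step: one must check that using $e_1$ and $e_2$ deletes precisely one site of each colour from each section, so that the first-minor hypothesis of Prop.~\ref{subsubsection:weightednuts} applies cleanly, and that the within-section completion uses no cross-section edges. The conceptual crux is the equivalence between ``lies in a complete matching'' and ``appears in $|Q|$'', which rests on algebraic independence ruling out cancellation and should be invoked explicitly.
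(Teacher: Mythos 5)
Your proposal is correct and follows essentially the same route as the paper: assume a two-sublattice connection, pair the two cross edges, use Prop.~\ref{subsubsection:weightednuts} to complete the matching inside each section after removing one black and one white site, and derive a contradiction with the factorisation $|Q|=\prod_k|q_k|$. You are in fact somewhat more explicit than the paper about the bookkeeping of which sites are deleted and about why the resulting monomial cannot cancel (algebraic independence), which are the points the paper's terse proof leaves implicit.
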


\begin{proof}
Given two sections $g_i,g_j$ assume they connect to one another on both sublattices. Delete four sites, a black and a white site in $g_i$ and a white and a black site in $g_j$ that connect to the original white and black site $g_i$. By Prop. \ref{subsubsection:weightednuts} the resulting structure has a complete matching. Therefore if we put back in the four deleted sites, and match them via their associated hopping terms, then there exists a complete matching of $g_i$ and $g_j$ that is not factored by $|q_i|$ and $|q_j|$. So by contradiction $g_i$ and $g_j$ connect on only one sublattice.
\end{proof}

\noindent To show that given $N$ sections there exists a section which can only connect to any other section from one sub lattice, we first define an edge section.

\begin{definition}
An \textbf{edge section} is a section that may only connect to all other sections from one sub lattice.
\end{definition}

\noindent To prove there always exists an edge section we consider a cycle of sections.

\begin{definition} \label{subsubsection:sectioncycle}
A \textbf{section cycle} is a path through a subset of sections in $G$ such that it starts and ends on the same section following hopping terms in $g_i$ and leaves each section on a different sub lattice to the one it entered on.
\end{definition}

\noindent We now show there cannot exist a section cycle. 

\begin{prop}
A section cycle cannot exist in any structure. 
\end{prop}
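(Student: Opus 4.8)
The plan is to argue by contradiction: assuming a section cycle exists, I would exhibit a complete matching of the whole structure that is forced to use the inter-section connecting terms. This contradicts the basic fact from the complete-matching expansion of the determinant, that a connecting term (one absent from every factor $|q_i|$, hence from $|Q|=\prod_i|q_i|$) lies in no complete matching of $G$.

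First I would track the sublattices around the cycle. Write the cycle as $g_{i_1}\to g_{i_2}\to\cdots\to g_{i_k}\to g_{i_1}$, and note that every connecting edge of $G$ joins a black site of one section to a white site of another, so traversing a connection flips the sublattice. Combining this with the defining property of a section cycle --- that we leave each section on the opposite sublattice to the one on which we entered --- forces the exit sublattice to be the same at every section: if we leave $g_{i_m}$ from a black site, the connecting edge lands us on a white site of $g_{i_{m+1}}$, and the ``enter $\neq$ leave'' condition makes us exit $g_{i_{m+1}}$ on black again. Hence, up to relabelling the sublattices, each step leaves $g_{i_m}$ from a black site $b_m$ and enters $g_{i_{m+1}}$ at a white site $w_{m+1}$, so the connecting edges form the cyclic pattern $b_m - w_{m+1}$ with indices mod $k$. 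I may take the cycle to be minimal, so the $g_{i_m}$ are distinct and each is entered and left exactly once; any non-simple closed walk of this type contains such a minimal sub-cycle, and the consistent exit-sublattice property is inherited by it.

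Next I would assemble the matching. In each cycle section $g_{i_m}$ the entry white site $w_m$ and the exit black site $b_m$ are distinct sites of opposite colour, so deleting them accesses a first minor of $q_{i_m}$; since every $|q_{i_m}|$ is irreducible, Prop.~\ref{subsubsection:weightednuts} gives that for almost all hopping terms this minor is non-zero, i.e.\ $g_{i_m}\setminus\{b_m,w_m\}$ has a complete matching. I then match every $b_m$ to $w_{m+1}$ along the connecting edges, complete each $g_{i_m}\setminus\{b_m,w_m\}$ internally, and take a complete matching of every section not on the cycle (each exists because $|q_j|\neq 0$ almost always). Together these pair up every site exactly once, giving a complete matching of the entire structure.

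The contradiction then follows from algebraic independence: distinct complete matchings correspond to distinct permutations and hence to distinct, non-cancelling monomials in the determinant expansion, so any edge used in some complete matching must appear in $|Q|$. Our matching uses the connecting edges $b_m-w_{m+1}$, forcing them into $|Q|=\prod_i|q_i|$ --- yet a connecting term appears in no factor and therefore in no complete matching, which is precisely what it means for it to be an inter-section connection. This contradiction rules out a section cycle. I expect the only delicate point to be the sublattice bookkeeping of the first paragraph, which pins the connecting edges to the consistent black-to-white pattern $b_m-w_{m+1}$; once that is established, Prop.~\ref{subsubsection:weightednuts} does the real work and the matching assembles immediately.
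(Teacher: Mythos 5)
Your proposal is correct and follows essentially the same route as the paper: delete one black and one white site from each section of the cycle (chosen as the exit/entry sites of the connecting edges), invoke Prop.~\ref{subsubsection:weightednuts} to complete-match what remains of each section, then reinsert the deleted sites matched along the connecting edges to produce a complete matching of $G$ that uses inter-section hopping terms, contradicting the factorisation of $|Q|$. Your explicit sublattice bookkeeping and reduction to a minimal cycle are welcome refinements of details the paper leaves implicit, but they do not change the argument.
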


\begin{proof}
Suppose every section is in a cycle, then every section connects to a distinct section on both sub lattices. For each distinct section cycle we can delete a pair of sites, one black and one white, from each section in the cycle such that the black and white sites connect to different sections in the cycle. By Prop. \ref{subsubsection:weightednuts} the resulting structure has a complete matching. Putting back in the deleted sites and hopping terms, we can now construct a complete matching that is not factored by any of the sections in that cycle. Therefore by contradiction there exist no section cycles. 
\end{proof}

\begin{corollary} \label{subsubsection:edgesection}
In a structure with $N$ sections there exists at least one edge section.
\end{corollary}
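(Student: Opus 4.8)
The plan is to recast the statement in terms of an auxiliary directed graph on the sections and then appeal to the impossibility of section cycles. First I would use the preceding proposition --- that any two sections connect on exactly one sublattice --- to fix an orientation on the set of sections: whenever a black site of $g_i$ connects to a white site of $g_j$, draw a directed edge $g_i \to g_j$. Because two connected sections meet on only one sublattice, this orientation is unambiguous, and every connection between distinct sections becomes a single directed edge. In this language an edge section is precisely a section all of whose incident edges point the same way: a \emph{source} (every connection leaving from its black sites, so all edges outgoing) or a \emph{sink} (every connection leaving from its white sites, so all edges incoming). The claim to prove is therefore that the oriented graph on the $N$ sections always possesses a source or a sink.

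The key observation is that a section cycle in the sense of Definition~\ref{subsubsection:sectioncycle} is exactly a directed cycle in this orientation. Following an edge $g_i \to g_j$ means traversing a connection from a black site of $g_i$ to a white site of $g_j$, so we leave $g_i$ on its black sublattice and enter $g_j$ on its white sublattice; the subsequent edge $g_j \to g_k$ then leaves $g_j$ on its black sublattice, i.e.\ on the sublattice opposite to the one by which we entered. Thus any closed directed walk that returns to its starting section leaves every visited section on the sublattice opposite to its entry, which is precisely the defining property of a section cycle. Conversely, a section cycle reads off as a closed sequence of consistently oriented edges. This identification lets me transfer the earlier proposition verbatim.

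With the correspondence in place I would argue by contradiction. Suppose no section is an edge section. Then, in particular, no section is a sink, so in the oriented graph every vertex has at least one outgoing edge. Starting from any section and repeatedly following an outgoing edge produces an arbitrarily long directed walk; since there are only $N$ sections, the walk must eventually revisit a section, and the portion of the walk between two visits to that section is a directed cycle, hence a section cycle. This contradicts the proposition that no section cycle can exist. Therefore at least one edge section must exist, which is the assertion of the corollary.

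The main thing to get right is the bookkeeping in the second paragraph: I must verify carefully that the alternation of entry and exit sublattices demanded by the definition of a section cycle corresponds exactly to consistently following directed edges of the orientation, so that the no-section-cycle proposition applies with no gap. Once that identification is clean, the remainder is the standard fact that a finite acyclic orientation has a vertex of out-degree (equivalently in-degree) zero; the explicit walk argument above makes this self-contained, so no additional graph-theoretic machinery is needed.
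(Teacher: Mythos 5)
Your proposal is correct and follows essentially the same route as the paper: both assume no edge section exists, use the fact that every section then connects to other sections on both sublattices to build a walk that must revisit a section by finiteness/pigeonhole, and derive a contradiction with the non-existence of section cycles. Your directed-graph formalization (edge sections as sources or sinks, section cycles as directed cycles) is a cleaner presentation of the same argument, which the paper states more tersely.
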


\begin{proof}
Suppose every section $g_i$ connects to at least one other section $g_j$ on the white sub lattice and one different section $g_k$ on the black sub lattice. Given such a requirement, then if no edge section exists there must be at least $N+1$ sections. Therefore by the pigeonhole principle, if every section connects on both sub lattices a section cycle exists, which is not possible by Prop. \ref{subsubsection:sectioncycle} giving a contradiction.
\end{proof}

\noindent We now wish to show that the existence of an edge section gives the triangular basis to $Q$.

\begin{theorem} \label{subsubsection:factorisationthm}
If $|Q| = \prod |q_i|$ for $|Q|,|q_i| \in P[X_{i,j}]$ then there exists a permutation of $H$ that gives a triangular block basis for $Q$.
\end{theorem}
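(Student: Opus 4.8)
The plan is to prove Theorem \ref{subsubsection:factorisationthm} by induction on the number of sections $N$, using Corollary \ref{subsubsection:edgesection} (existence of an edge section) as the engine. The base case $N=1$ is trivial, since a single block is already in (degenerate) triangular form. For the inductive step, I would assume the result holds for any factorisation into $N-1$ blocks, and consider $|Q| = \prod_{i=1}^{N} |q_i|$.

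First I would invoke Corollary \ref{subsubsection:edgesection} to extract an edge section $g_e$, which by definition connects to every other section from only one sublattice. Without loss of generality (relabelling colours if necessary) suppose $g_e$ connects to all other sections on its black sublattice only, meaning non-zero hopping terms run from black sites of $g_e$ to white sites of the other sections, while no white site of $g_e$ connects to a black site elsewhere. I would then order the sites so that $g_e$ occupies the last block. The claim is that in this ordering the column strip corresponding to $g_e$ has zeros above the diagonal block $q_e$: because $Q$ maps (say) white sites to black sites, an off-diagonal entry above $q_e$ would correspond to a white site of some $g_j$ connecting to a black site of $g_e$ — precisely the forbidden sublattice for an edge section. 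Hence $Q$ has the block form
\begin{equation}
Q = \begin{pmatrix} Q' & C \\ 0 & q_e \end{pmatrix},
\end{equation}
where $Q'$ is the submatrix on the remaining $N-1$ sections and $C$ collects the connecting terms. Taking determinants gives $|Q| = |Q'|\,|q_e|$, and by the uniqueness of the irreducible factorisation (the preceding Proposition) $|Q'| = \prod_{i\neq e} |q_i|$ is exactly the factorisation into the remaining $N-1$ sections.

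I would then apply the inductive hypothesis to $Q'$: there is a permutation of its sites bringing it to upper block triangular form with the $q_i$, $i \neq e$, on the diagonal. Lifting this permutation to act on the full site set (fixing $g_e$ in place) keeps the last block-column zero below $Q'$, so the whole of $Q$ is now upper block triangular with $q_e$ as its final diagonal block. Since permuting sites within and across sections corresponds to a permutation of the rows and columns of $Q$ — equivalently a permutation of $H$ by a relabelling of the basis — this establishes the existence of a basis giving the triangular block form, completing the induction.

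The main obstacle I anticipate is the careful bookkeeping of \emph{which} sublattice connects to which in the edge-section step, together with making sure the permutation used is genuinely a relabelling of $H$ (a simultaneous permutation of black and white sites) rather than an arbitrary row/column reordering that could destroy the chiral block structure. One subtlety is that the ``edge'' property guarantees one-sublattice connectivity pairwise, but I must confirm it yields a consistent \emph{global} orientation — that putting $g_e$ last produces zeros in the correct off-diagonal strip simultaneously against all other blocks — which is exactly the content packaged into the definition of an edge section, so the nonexistence of section cycles (Proposition \ref{subsubsection:sectioncycle}) is doing the real work. Provided the edge section is interpreted with the correct sublattice convention fixed by the direction of $Q$, the triangularisation then follows cleanly by iterating, and the argument only relies on results already established above.
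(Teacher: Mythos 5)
Your proposal is correct and follows essentially the same route as the paper: the paper's proof also repeatedly invokes the edge-section corollary to peel off one section at a time (deleting it and reapplying the corollary to the remaining $N-1$ sections) until the triangular block ordering is built up. Your version is simply a more carefully written induction, making explicit the block form, the use of unique factorisation to identify $|Q'|$ with the product of the remaining factors, and the sublattice bookkeeping that the paper leaves implicit.
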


\begin{proof}
Given a structure with $N$ sections then by corollary \ref{subsubsection:edgesection} there is at least one section, $g_1$, which only connects to all others from one sub lattice. Taking $G$ and deleting $g_1$ yields a new structure, $\bar{G}$, with $N-1$ sections. By corollary \ref{subsubsection:edgesection} this structure also has an edge section $g_2$. This may be iterated until only one section remains. This gives the partial ordering of the structure and therefore defines a triangular block basis of $Q$.
\end{proof}

\section{\label{sec:algorithm}A classification algorithm}

\noindent We present an algorithm to find the classification of a structure based on Theorem \ref{subsubsection:factorisationthm}. That is, we take some input structure, with a chiral Hamiltonian and find an ordering of the sites where $Q$ is block triangular, and every diagonal block corresponds to a section of the structure. We refer to the corresponding basis as the \textbf{maximal triangular block basis} of $Q$. Then, given the domain of each hopping term and numerical bounds on the determinant, we check if a section is trivial or not. We anticipate that this algorithm can be significantly optimised, but currently we are able to analyse up to random networks with around 50 sites. Of course, given a more consistent underlying lattice structure it often makes sense to classify a large (say many thousand sites or more) system by proving some simple results about the complete matchings of that lattice, coupled with some boundary conditions. \\
\indent There are two main parts of the algorithm: to find a triangular block basis of $Q$ with all sections on the diagonal, and then to classify each section. The former is more technically challenging. The first part uses the fact that to get a triangular block structure to and $N\times N$ size $Q$ there needs to be a $j\times (N-j)$ block of zeros, so by taking the complement of $Q$ (with all unit hopping terms) then we can find overlaps in zeros by taking dot products between column vectors, or a generalised product over several column vectors. The generalised product is defined as follows.

\begin{definition}
The generalised product of $N$ column vectors $\{c^i\}$ is given by
\begin{equation}
p = \sum_j \prod_{i} c^i_j
\end{equation}
where $j$ denotes the $j$th entry of the column vector $c^i$.
\end{definition}

Given $N-j$ columns, if $p\geq j$ then there is a $j\times (N-j)$ block of zeros in $Q$. The algorithm then follows the following general outline

\begin{enumerate}
\item Compute $Q^{\circ}$ where $Q^{\circ}_{i,j}=1$ if $Q_{i,j}=0$, otherwise $Q^{\circ}_{i,j}=0$
\item Check for $1\times (N-1)$ blocks of zeros by checking $\sum_i Q^{\circ}_{i,j}$ for every $j$. Take indices of all blocks of this size.
\item Find pairwise overlap matrix, A, of two column vectors of $Q^{\circ}_{i,j}=1$,
\begin{equation}
A_{i,j} = \sum_k Q^{\circ}_{k,i}Q^{\circ}_{k,j}
\end{equation}
\item If any $A_{i,j}\geq N-2$ there is a $2\times (N-2)$ block of zeros. Take indices of all blocks of this size.
\item For $j\geq 2$ take all columns that have overlap greater than $N-j$. If number of columns with such an overlap $\geq N-j$ set $\text{Check}=1$.
\item[] \hspace{1em} If Check==1:
\item[] \hspace{2em} For all combinations of $N-j$ columns with pairwise overlap $\geq N-j$ compute p.
\item[] \hspace{3em} If $p\geq N-j$ return indices of $j\times (N-j)$ block of zeros.
\end{enumerate}

A simple way to optimise this algorithm slightly is to project on to a sub matrix of $Q$ that excludes any individual section, whenever a relevant $j\times (N-j)$ block of zeros is found. Then the combinations of column vectors that need to be checked to find more blocks of zeros is significantly smaller. Furthermore the complexity of the search scales with the number of potential column vectors to check. So if the search for zero blocks can be done up until $j\leq \frac{N}{2}$ for the column vectors, and then switched to do a search over the remaining row vectors the algorithm may be faster. \\
\indent The second part of the algorithm is much simpler, and only applies when hopping terms are restricted to being real. By the fundamental theorem of algebra if all hopping terms are complex then a solution to $|q_i|=0$ always exists for at least a $2\times 2$ size block $q_i$ corresponding to a section. For hopping terms restricted to $\mathbb{R}^{\pm}$ we require some more computation. The first part of the algorithm computes the number of sections of $Q$, and by indexing blocks of zeros finds a maximal triangular block basis for $Q$. So we now need to see if each factor can be set to 0. This works by considering bounds for the largest value of $|q_i|$ for all $|u| \in (0.5,1]$ hopping terms, where the sign of $u$ set by the domain of that individual hopping term (i.e. if $u\in\mathbb{R}^+$ then $u>0.5$ and if $u\in\mathbb{R}^-$ then $u<-0.5$). All hopping terms are then selected as a random float in $\pm(0.5,1]$, and it is checked if this gives a singular section, if not we proceed, otherwise we reselect randomly reselect hopping terms. Suppose all hopping terms in $q_i$ are non-zero, then by Prop. \ref{subsubsection:weightednuts} every hopping term appears in the expansion of $|q_i|$. The maximum number of non-zero terms in the determinant of an $N\times N$ matrix is $N!$ and so $|q_i|\leq N!$. Therefore if a particular hopping term, $a$, is left free to vary then $|q_i| = aA + B$. If solutions to $|q_i|=0$ exist for some $a$ then for $|a| \times (0.5)^{N-1}> (N-1)!$ and $|q_i|$ has one sign, and for  $a=0$ then $|q_i| = B$ and has a different sign. This is then repeated for every hopping term until a solution is shown to exist, or it is shown to not exist for all sections. That is

\begin{enumerate}
\item Randomly select all hopping terms $u$ so that $|u|\in (0.5,1]$ and the sign corresponds to the domain of that hopping term. If $|q_i|\neq0$ proceed, otherwise randomly reselect all hopping terms until $|q_i|\neq 0$.
\item Keeping the hopping terms as in the first step set $a = \pm 2^N (N-1)!$ where the sign is specific to the domain of $a$. Calculate $\sign (|q_i|)$.
\item Keeping the hopping terms as in the first step set $a = 0$ where the sign is specific to the domain of $a$. Calculate $\sign (|q_i|)$.
\item If a change of sign is found:
\item[] \hspace{1em} Return section non-trivial
\item Else:
\item[] \hspace{1em} If all hopping terms checked:
\item[] \hspace{2em} Return section trivial
\item[] \hspace{1em} Else:
\item[] \hspace{2em} Set $a$ to its value from step 1 and repeat from step 2 for a different hopping term $b$
\end{enumerate}

\noindent This part of the algorithm can be optimised by finding a set of sites connected on a loop within the section that is itself non-trivial, then only one hopping term would need to be varied (one from this ring) for $q_i$ to be non-trivial.

\section{\label{sec:expdets}Experimental details}
\noindent Experimental data were collected with a NanoVNA V2 Plus 4 and NanoVNA V2 Plus 4 pro. Two measurements were considered: two port measurements for transmittance and a single port measurement for reflectance. To operate the VNA, the software NanoVNA-Saver was used. In making a structure, cables were taken from a binary distribution of 50$\Omega$ RG58 SMA cables, and 93$\Omega$ RG62 SMA cables. SMA connecters are generically available at only 50$\Omega$ impedances and so for the clean structure all cables were chosen to be 50$\Omega$. The exact cables used in the disordered structures are displayed in Fig. \ref{ExperimentalStructureForTransmittance}. \\
\indent Data were collected between $1-240$MHz with $\approx 114 MHz$ being the frequency at which $\varepsilon = \cos\omega\tau = 0$. \\ 
\indent For transmittance data, the four input states described in Table \ref{TableOfInputs} were measured in both the clean and disordered structure, and transmittance data taken in separate experiments on each of the four output sites.

\begin{table}[hbpt!]
\centering
\begin{tabular}{|c||c|c|c|c|}
\hline
Input site & Input 1 & Input 2 & Input 3 & Input 4 \\
\hline
$g_1$ & $(I,V)$ & Open & Open & Open \\
$g_2$ & Open & $(I,V)$ & Short & Short \\
$g_3$ & Short & Short & $(I,V)$ & Open \\
$g_4$ & Open & Open & Open & $(I,V)$ \\
\hline
\end{tabular}
\caption{The input states for the classification experiments in the graphene CCN. An entry of $(I,V)$ denotes that this is the site where the input port of the VNA is attached to the CCN.}
\label{TableOfInputs}
\end{table}

\noindent At zero energy, the transfer matrix for the disordered 4 row ribbon graphene is given by the following

\begin{equation} \label{transfermatrix}
\begin{pmatrix}
\frac{50}{93} & 100i & -\frac{143}{93} & - \frac{2500}{804357} i &  &  &  &  \\
0 & 1 & \frac{143}{2500} i & -\frac{8649}{2500} &  &  &  &  \\
0 & 0 & \frac{93}{50} & \frac{557450}{8649}i &  &  &  &  \\
0 & 0 & 0 & \frac{50}{93} &  &  &  & \\
 &  &  &  & \frac{93}{50} & 0 & 0 & 0 \\
 &  &  &  & \frac{557450}{8649}i & \frac{50}{93} & 0 & 0 \\
 &  &  &  & -\frac{1161857}{804357} & \frac{143}{8649}i & 1 & 0 \\
 &  &  &  & -\frac{125000}{8649}i & -\frac{50}{93} & 186i & \frac{93}{50}
\end{pmatrix}
\end{equation}
where numerical values are a consequence of the $50$ $\Omega$ and $93$ $\Omega$ coaxial cables. By applying the four input states of Table \ref{TableOfInputs} to equation \eqref{transfermatrix} shows the transmittance is only non-zero for sites in the same section.

\begin{figure}
\centering
\begin{tikzpicture}
\node (1) at (2*4,2*0.25) {{\footnotesize{$\circ$}}};
\node (2) at (2*4,-2*0.25) {{\footnotesize{$\bullet$}}};

\node (3) at (2*4.866,2*0.25) {\footnotesize{$\circ$}};
\node (4) at (2*4.866,-2*0.25) {\footnotesize{$\bullet$}};

\node (5) at (2*4.433,-2*0.5) {{\footnotesize{$\circ$}}};
\node (6) at (2*4.433,2*0.5) {{\footnotesize{$\bullet$}}};

\node (7) at (2*3.567,-2*0.5) {{\footnotesize{$\circ$}}};
\node (8) at (2*3.567,2*0.5) {{\footnotesize{$\bullet$}}};

\node (9) at (2*3.144,-2*0.25) {\footnotesize{$\bullet$}};
\node (10) at (2*3.144,2*0.25) {\footnotesize{$\circ$}};

\draw[dashed] (1) -- (2);
\draw (3)-- (4);

\draw (5)--(4);
\draw[dashed] (5)--(2);
\draw (6)--(1);
\draw[dashed] (6)--(3);
\draw (7)--(2);
\draw (7)--(9);
\draw (8)--(1);
\draw[dashed] (8)--(10);

\node (11) at (2*4.433,2*1) {{\footnotesize{$\circ$}}};
\node (12) at (2*4.866,2*1.25) {\footnotesize{$\bullet$}};
\node (13) at (2*4,2*1.25) {{\footnotesize{$\bullet$}}};
\node (14) at (2*3.567,2*1) {{\footnotesize{$\circ$}}};
\node (15) at (2*3.144,2*1.25) {\footnotesize{$\bullet$}};

\draw (11) -- (12);
\draw (11)-- (13);
\draw[dashed] (13)--(14);
\draw (14)--(15);

\draw (6)--(11);
\draw[dashed] (8)--(14);

\node (16) at (2*4.433,-2*1) {{\footnotesize{$\bullet$}}};
\node (17) at (2*4.866,-2*1.25) {\footnotesize{$\circ$}};
\node (18) at (2*4,-2*1.25) {{\footnotesize{$\circ$}}};
\node (19) at (2*3.567,-2*1) {{\footnotesize{$\bullet$}}};
\node (20) at (2*3.144,-2*1.25) {\footnotesize{$\circ$}};

\draw (16) -- (17);
\draw[dashed] (16)-- (18);
\draw (18)--(19);
\draw (19)--(20);

\draw[dashed] (19)--(7);
\draw[dashed] (16)--(5);

\node (0) at (2*4.866,3) {Inputs};
\node (00) at (2*3.144,3) {Outputs};

\draw[dash dot] (2*4.866,0) circle [x radius=0.3, y radius=2.7, rotate=0];
\draw[dash dot] (2*3.144,0) circle [x radius=0.3, y radius=2.7, rotate=0];

\draw[dotted] (2*3.144+1.722,2.35) circle [x radius=2, y radius=0.6, rotate=0];
\draw[dotted] (2*3.144+1.722,-2.35) circle [x radius=2, y radius=0.6, rotate=0];

\draw[dotted] (2*3.144+1.722,-0.65) circle [x radius=2, y radius=0.6, rotate=0];
\draw[dotted] (2*3.144+1.722,0.65) circle [x radius=2, y radius=0.6, rotate=0];

\end{tikzpicture}
\caption{The disordered structure the transmittance experiment was performed on, as discussed in section III 2 of the paper, with sites in each section already cut for measurements. The dashed lines denote the 93$\Omega$ RG62 cables, and the solid lines denote 50$\Omega$ RG58 cables. The same structure with all 50$\Omega$ RG58 cables is the clean structure used for the transmittance experiments. The dash-doted ellipses denote the input and output sites. The same structure with uncut sites was used for the localisation experiment in section III 1 of the paper. This structure has 4 sections and a 4$\mathbb{Z}_2$ classification.}
\label{ExperimentalStructureForTransmittance}
\end{figure}

\end{document}